\newif\ifdraft

\let\ifdraft\iffalse %

\ifdraft
    \documentclass[reqno]{amsart}						%
    \usepackage{microtype}
    \usepackage{fullpage}
\else

    \documentclass[smallcondensed]{svjour3}
    \smartqed  

    \makeatletter
    \def\cl@chapter{\@elt {theorem}}
    \makeatother
\fi

\ifdraft
\usepackage{amsmath}						%
\usepackage{amssymb}						%
\usepackage{amsfonts}						%

\usepackage{amsthm}

\usepackage[foot]{amsaddr}						%
\usepackage{array}

\usepackage{mathtools}						%
\mathtoolsset{%
showonlyrefs,	%
}
\else
\usepackage{amsmath}
\usepackage{amsthm}
\usepackage{amssymb}
\fi

\usepackage[utf8]{inputenc}							%
\usepackage[T1]{fontenc}						%
\DeclareUnicodeCharacter{00A0}{ }

\ifdraft
\usepackage[proportional,lining]{libertine}
\usepackage[libertine,libaltvw,cmintegrals,varbb]{newtxmath}
\fi

\usepackage{dsfont}							%

\usepackage[%
cal=cm,
]
{mathalfa}

\usepackage[labelfont={bf,small},labelsep=colon,font=small]{caption}	%
\captionsetup[algorithm]{labelfont=bf,labelsep=colon}				%

\usepackage[dvipsnames,svgnames]{xcolor}						%
\colorlet{MyBlue}{DodgerBlue!75!Black}
\colorlet{MyGreen}{DarkGreen!85!Black}

\usepackage{subcaption}
\captionsetup[subfigure]{subrefformat=simple,labelformat=simple,font=footnotesize}
\usepackage{tikz}	\usetikzlibrary{patterns}
\usepackage{pgfplots}

\pgfplotsset{compat=1.14}
\usepgfplotslibrary{fillbetween}

\newenvironment{customlegend}[1][]{%
    \begingroup
    \csname pgfplots@init@cleared@structures\endcsname
    \pgfplotsset{#1}%
}{
    \csname pgfplots@createlegend\endcsname
    \endgroup
}
\def\addlegendimage{\csname pgfplots@addlegendimage\endcsname}

\usepackage[many]{tcolorbox}
\usepackage[algo2e]{algorithm2e}
\usepackage{algorithm}
\usepackage{array}
\usepackage{multicol}
\usepackage{multirow}

\newtcolorbox{worker}[1]{%
    tikznode boxed title,
    boxsep = -1mm,
    enhanced,
    arc=0mm,
    interior style={white},
    boxrule=0.5mm,
    width=0.95\linewidth,
    attach boxed title to top center= {yshift=-\tcboxedtitleheight/2},
    fonttitle=\bfseries,
    colbacktitle=white,coltitle=black,
    boxed title style={size=normal,colframe=white,boxrule=0pt},
    title={#1}
    }

\usepackage{acronym}						%
\usepackage{booktabs}						%
\usepackage{latexsym}						%
\usepackage{paralist}						%
\usepackage{xspace}							%

\ifdraft
\usepackage[sort&compress]{natbib}					%

\fi

\usepackage{hyperref}
\hypersetup{
final,
colorlinks=true,
linktocpage=true,
pdfstartview=FitH,
breaklinks=true,
pdfpagemode=UseNone,
pageanchor=true,
pdfpagemode=UseOutlines,
plainpages=false,
bookmarksnumbered,
bookmarksopen=false,
bookmarksopenlevel=1,
hypertexnames=true,
pdfhighlight=/O,
urlcolor=Maroon,linkcolor=MyBlue!60!black,citecolor=DarkGreen!70!black,	%
pdftitle={},
pdfauthor={},
pdfsubject={},
pdfkeywords={},
pdfcreator={pdfLaTeX},
pdfproducer={LaTeX with hyperref}
}

\numberwithin{equation}{section}						%
\usepackage[sort&compress,capitalize,nameinlink]{cleveref}			%
\crefrangeformat{equation}{\upshape(#3#1#4)\textendash(#5#2#6)}

\ifdraft
    \theoremstyle{plain}
    \newtheorem{theorem}{Theorem}						%
    \newtheorem*{theorem*}{Theorem}						%
    \newtheorem*{corollary*}{Corollary}						%
    \newtheorem{lemma}[theorem]{Lemma}					%
    \newtheorem{proposition}[theorem]{Proposition}				%

    \theoremstyle{definition}
    \newtheorem*{definition*}{Definition}					%
    \newtheorem{assumption}{Assumption}					%
    \newtheorem*{assumption*}{Assumptions}					%

    \theoremstyle{remark}
    \newtheorem{remark}{Remark}						%
    \newtheorem*{remark*}{Remark}						%
    \newtheorem*{example*}{Example}						%
\else
    \newtheorem{assumption}{Assumption}
\fi

\ifdraft
\fi

\numberwithin{equation}{section}						%
\numberwithin{theorem}{section}						%
\numberwithin{remark}{section}						%
\numberwithin{example}{section}						%

\usepackage[scaled=.75]{beramono}
\usepackage{listings}

\lstdefinelanguage{Julia}%
  {morekeywords={abstract,break,case,catch,const,continue,do,else,elseif,%
      end,export,false,for,function,immutable,import,importall,if,in,%
      macro,module,otherwise,quote,return,struct,switch,true,try,type,typealias,%
      using,while},%
    sensitive=true,%
    morecomment=[l]\#,%
    morecomment=[n]{\#=}{=\#},%
    morestring=[s]{"}{"},%
    morestring=[m]{'}{'},%
}[keywords,comments,strings]%

\lstset{%
    language         = Julia,
    basicstyle       = \ttfamily\footnotesize,
    keywordstyle     = \bfseries\color{red},
    stringstyle      = \color{magenta},
    commentstyle     = \color{ForestGreen},
    showstringspaces = false,
}

\floatstyle{ruled}
\newfloat{Algorithm}{h!}{lop}%

\newcommand{\NA}{\mathcal{W}}
\newcommand{\A}{\mathsf{A}}
\newcommand{\B}{\mathsf{B}}
\newcommand{\M}{\mathsf{M}}
\newcommand{\E}{\RR^{S\times n}}

\newcommand{\selected}{s^k}

\DeclareMathOperator{\var}{x} %
\DeclareMathOperator{\vartwo}{u} %
\DeclareMathOperator{\varvar}{y} %
\DeclareMathOperator{\varvartwo}{v} %

\DeclareMathOperator{\varglob}{z} %
\DeclareMathOperator{\varvarglob}{w} %

\DeclareMathOperator{\varglobold}{\hat{z}} %
\DeclareMathOperator{\varvarglobold}{\hat{w}} %
\DeclareMathOperator{\varold}{\hat{x}} %
\DeclareMathOperator{\varvarold}{\hat{y}} %

\DeclareMathOperator{\Id}{Id} %

\newcommand{\cvar}{\text{CVar}} %

\DeclareMathOperator*{\argmin}{argmin}
\newcommand{\RR}{\mathbb{R}}

\newcommand{\tool}{\texttt{RPH}\xspace}

\begin{document}

\title{Randomized Progressive Hedging methods\\ for Multi-stage Stochastic Programming}

\author{Gilles Bareilles \and
        Yassine Laguel \and
        Dmitry Grishchenko \and
        Franck Iutzeler \and
        Jérôme Malick
        }

\ifdraft

\else
\titlerunning{Randomized Progressive Hedging methods}
\institute{G. Bareilles, Y. Laguel, D. Grishchenko, F. Iutzeler \at
              Univ. Grenoble Alpes \\
              \email{\{firstname.lastname\}@univ-grenoble-alpes.fr}           %
           \and
           J. Malick \at
              CNRS and LJK \\
              \email{jerome.malick@univ-grenoble-alpes.fr}
}
\date{ {Revised version} Received: date / Accepted: date}
\fi

\maketitle

\begin{abstract}
Progressive Hedging is a popular decomposition algorithm for solving multi-stage stochastic optimization problems. A computational bottleneck of this algorithm is that \emph{all} scenario subproblems have to be solved at each iteration. In this paper, we introduce randomized versions of the Progressive Hedging algorithm able to produce new iterates as soon as a \emph{single} scenario subproblem is solved. Building on the relation between Progressive Hedging and monotone operators, we leverage recent results on randomized fixed point methods to derive and analyze the proposed methods. Finally, we release the corresponding code as an easy-to-use Julia toolbox and report computational experiments  showing the practical interest of randomized algorithms, notably in a parallel context. Throughout the paper, we pay a special attention to presentation, stressing main ideas, avoiding extra-technicalities, in order to make the randomized methods accessible to a broad audience in the Operations Research community.
\end{abstract}

\section{Introduction}

\subsection{Context: decomposition of stochastic problems and computational limitations}

Stochastic optimization is a rich and active research domain with various applications in science and engineering ranging from telecommunication and medicine to finance; we refer to the two textbooks \cite{ruszczynski1997decomposition} and \cite{shapiro2009lectures} for theoretical foundations of this field and pointers to applications. Expressive stochastic models lead to large-dimensional optimization problems, that may be computationally challenging. In many applications, the randomness is highly structured (e.g. in multistage stochastic programming) and can be exploited by decomposition methods \cite[Chap.~3.9]{ruszczynski2003stochastic}.
The two main advantages of decomposition methods are that i) they replace a large and difficult stochastic programming problem by a collection of smaller problems; and ii) these smaller subproblems can usually be solved efficiently with standard off-the-shelf optimization software. As a result, decomposition methods provide an efficient and specialized methodology for solving large and difficult stochastic programming problems by employing readily available tools.

Progressive Hedging is a popular dual decomposition method for multistage stochastic programming. This algorithm was introduced in \cite{rockafellar1991scenarios} and can be interpreted as a fixed-point method over a splitting operator \cite{ruszczynski1997decomposition}.
Through this connection, Progressive Hedging is proved to be convergent for solving convex stochastic programs. There are also many applications to mixed-integer stochastic problems where Progressive Hedging acts as an efficient heuristic to get useful bounds; see e.g.\;\cite{watson2011progressive}. For historical perspectives, theoretical analysis, and references to applications, we refer to \cite{ruszczynski1997decomposition}.

Progressive Hedging tackles multi-stage stochastic problems by decomposing them over the scenarios and solving independently the smaller subproblems relative to one scenario. However the number of these subproblems grows exponentially with the number of stages, so that the computational bottleneck of this algorithm is that \emph{all} scenario subproblems have to be solved at each iteration.
 {As a decomposition method solving scenario subproblems independently, Progressive Hedging is an intrinsically parallel algorithm and admits direct parallel implementations for distributed computing systems (e.g.\;multiple threads in a machine, or multiple machines in a cluster).
In the homogeneous case (where all subproblems are solved with similar duration) such parallel implementations are efficient in practice and require no additional theoretical study; for early works discussing parallelization, see e.g.\;the doctoral dissertation\;\cite{somervell1998progressive} and the conference papers\;\cite{de1993computational,ryan2013toward}.
However, when the subproblems have different difficulties or the computing system is heterogeneous (e.g. with different machines or non-reliable communications between machines), the parallelization speed-up can be drastically degraded. Thus designing efficient, theoretically-grounded, variants of Progressive Hedging for heterogeneous distributed settings is still an on-going research topic (see e.g. the preprint\;\cite{eckstein2018asynchronous}).}

\subsection{Contribution: accessible, efficient, parallel Progressive Hedging variants}

In this paper, we present optimization methods based on Progressive Hedging having efficient parallel implementation and able to tackle large-scale multistage stochastic problems. Our variants are randomized algorithms solving subproblems incrementally, thus alleviating the synchronization barrier of the standard Progressive Hedging. When deployed on computing systems having multiple workers, our algorithms are able to make the most of the computational abilities, synchronously or asynchronously.

This work is based on the interaction of two complementary fields of research:
\begin{itemize}
    \item applications of Progressive Hedging in the OR community with expressive uncertainty models leading to large-scale multistage stochastic problems;
    \smallskip
    \item recent developments on randomization techniques in the optimization and monotone operators community,
    motivated by the distributive abilities of modern computing systems.
\end{itemize}
The connection between these two domains is natural, through the well-known interpretation of Progressive Hedging as a fixed point algorithm (see e.g.\;\cite{ruszczynski1997decomposition}). We also build on this connection to propose our efficient randomized Progressive Hedging algorithms. We pay a special attention to making our developments easily accessible for a broad audience in the OR and stochastic programming community: we explicitly derive the proposed methods from the textbook formulation of Progressive Hedging; we rely on well-established results to highlight fundamental ideas and to hide unnecessary technicalities. Furthermore, we take advantage of the recent distributive abilities of the Julia language \cite{Julia-2017} (using the \emph{Distributed} module) and provide an easy-to-use toolbox solving multistage stochastic programs with the proposed methods.

\section{Multistage stochastic programs: recalls and notation}\label{sec:notation}

In this section, we lay down the multistage stochastic model considered in this paper as well as our notation. We follow closely the notation of the textbook \;\cite[Chap.\;3]{ruszczynski2003stochastic}.

Stochastic programming deals with optimization problems involving uncertainty, modelled by random variable $\xi$, with the goal to find a feasible solution $x(\xi)$ that is optimal in some sense relatively to $\xi$. Considering an objective function $f$ and a risk measure $\mathcal{R}$, the generic formulation of a stochastic problem is
\begin{align}
\label{eq:stopb}
    \min_{x} ~~\mathcal{R}\big(f(x(\xi),\xi)\big).
\end{align}
For an extensive review of stochastic programming, see e.g. \cite{shapiro2009lectures}.

In the multistage setting, the uncertainty of the problem is revealed sequentially in $T$ stages. The random variable $\xi$ is split into $T-1$ chunks,  $\xi=(\xi_1,..,\xi_{T-1})$, and the problem at hand is to decide at each stage $t=1,\ldots,T$ what is the optimal action, $\var_t(\xi_{[1,t-1]})$, given the previous observations $\xi_{[1,t-1]}:=(\xi_1,\ldots,\xi_{t-1})$. The global variable of this problem thus writes
\begin{align*}
  \var(\xi) = (\var_1, \var_2(\xi_1), \ldots , \var_T(\xi_{[1,T-1]})) \in \mathbb{R}^{n_1} \!\times \dots \times \mathbb{R}^{n_T} \!= \mathbb{R}^{n}
\end{align*}
where $(n_1,\ldots,n_T)$ are the size of the decision variable at each stage and $n=\sum_{t=1}^T n_t$ is the total size of the problem.

We focus on the case where the random variable $\xi$ can take a finite number $S$ of values called \emph{scenarios} and denoted by $\xi^1,\ldots,\xi^S$. Each scenario occurs with probability $ {p_s} = \mathbb{P}[\xi = \xi^s]$ and is revealed in $T$ stages through one common start and a realization of the random variable $\xi^s = (\xi_1^s,\ldots,\xi_{T-1}^s)$. It is thus natural to represent the scenarios as the outcome of a probability tree, as illustrated in Figure\;\ref{fig:proba_tree}.
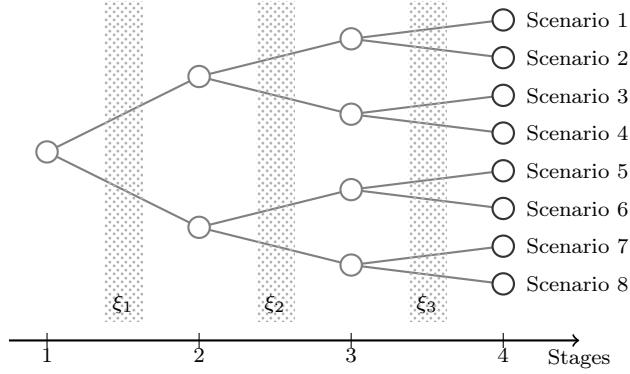
\begin{figure}[ht!]
    \centering
    \begin{tikzpicture}[]

\draw[thick, ->] (1.5,0) -- (9,0) node [below] {Stages};
    \foreach \x in {1,...,4}
    \draw (2*\x, 0.1) -- node[pos=0.5,below] {\x} (2*\x, -0.1) ;

    \draw[thick, white!50!black] (2,2.5) -- (4,1.5) ;
    \draw[thick, white!50!black] (2,2.5) -- (4,3.5) ;
    
        \draw[thick, white!50!black] (4,1.5) -- (6,1) ;
    \draw[thick, white!50!black] (4,1.5) -- (6,2)  ;
    \draw[thick, white!50!black]  (4,3.5) -- (6,3)  ;
    \draw[thick, white!50!black]  (4,3.5) -- (6,4)  ;

    \draw[thick, white!50!black]  (6,1) -- (8,0.75);
    \draw[thick, white!50!black]  (6,1) -- (8,1.25);
    
    \draw[thick, white!50!black]  (6,2) -- (8,1.75);
    \draw[thick, white!50!black]  (6,2) -- (8,2.25);
    
    \draw[thick, white!50!black]  (6,3) -- (8,2.75);
    \draw[thick, white!50!black]  (6,3) -- (8,3.25);
    
        \draw[thick, white!50!black]  (6,4) -- (8,3.75);
    \draw[thick, white!50!black]  (6,4) -- (8,4.25);

    \fill[pattern=crosshatch dots,pattern color = white!70!black] (2.75,0.25) rectangle (3.25,4.5);
    \node[above] at (3,0.25) {$\xi_1$};
    
        \fill[pattern=crosshatch dots,pattern color = white!70!black] (4.75,0.25) rectangle (5.25,4.5);
            \node[above] at (5,0.25) {$\xi_2$};
        
            \fill[pattern=crosshatch dots,pattern color = white!70!black] (6.75,0.25) rectangle (7.25,4.5);
                \node[above] at (7,0.25) {$\xi_3$};
    
    \node[draw,white!50!black,circle,fill=white,thick, scale = 1.0]  at (2,2.5) {};
    
    \node[draw,white!50!black,circle,fill=white,thick, scale = 1.0]  at (4,1.5) {};
    \node[draw,white!50!black,circle,fill=white,thick, scale = 1.0]  at (4,3.5) {};

    \node[draw,white!50!black,circle,fill=white,thick, scale = 1.0]  at (6,1.0) {};
    \node[draw,white!50!black,circle,fill=white,thick, scale = 1.0]  at (6,2.0) {};
    \node[draw,white!50!black,circle,fill=white,thick, scale = 1.0]  at (6,3.0) {};
    \node[draw,white!50!black,circle,fill=white,thick, scale = 1.0]  at (6,4.0) {};

    \node[draw,white!20!black,circle,fill=white,thick, scale = 1.0]  at (8,0.75) {};
    \node[right]  at (8.2,0.75) {Scenario $8$};
    \node[draw,white!20!black,circle,fill=white,thick, scale = 1.0]  at (8,1.25) {};
    \node[right]  at (8.2,1.25) {Scenario $7$};
    
    \node[draw,white!20!black,circle,fill=white,thick, scale = 1.0]  at (8,1.75) {};
    \node[right]  at (8.2,1.75) {Scenario $6$};
    \node[draw,white!20!black,circle,fill=white,thick, scale = 1.0]  at (8,2.25) {};
    \node[right]  at (8.2,2.25) {Scenario $5$};
    
    \node[draw,white!20!black,circle,fill=white,thick, scale = 1.0]  at (8,2.75) {};
    \node[right]  at (8.2,2.75) {Scenario $4$};
    \node[draw,white!20!black,circle,fill=white,thick, scale = 1.0]  at (8,3.25) {};
    \node[right]  at (8.2,3.25) {Scenario $3$};
    
    \node[draw,white!20!black,circle,fill=white,thick, scale = 1.0]  at (8,3.75) {};
    \node[right]  at (8.2,3.75) {Scenario $2$};
    \node[draw,white!20!black,circle,fill=white,thick, scale = 1.0]  at (8,4.25) {};
    \node[right]  at (8.2,4.25) {Scenario $1$};

\end{tikzpicture}
    \caption{Scenarios as the outcomes of  a probability tree.}
    \label{fig:proba_tree}
\end{figure}
For each scenario $s\in\{1,\ldots,S\}$, the target of multistage stochastic programming is to provide a decision $\var^s = (\var_1, \var_2(\xi^s_1), \ldots , \var_T(\xi^s_{[1,T-1]}))$, and thus the full problem variable writes
\begin{align}
\label{eq:var}
    \var = \begin{pmatrix}
    \var_1 & \var_2\left(\xi_1^1\right) & \dots & \var_{T-1}\left(\xi^1_{[1,\ldots,T-2]}\right) & \var_T\left(\xi^1_{[1,\ldots,T-1]}\right) \\
        \var_1 & \var_2\left(\xi_1^2\right) & \dots & \var_{T-1}\left(\xi^2_{[1,\ldots,T-2]}\right) & \var_T\left(\xi^2_{[1,\ldots,T-1]}\right)  \\
    \vdots& & & & \vdots\\
       \var_1 & x_2\left(\xi_1^S\right) & \dots & \var_{T-1}\left(\xi^S_{[1,\ldots,T-2]}\right) & \var_T\left(\xi^S_{[1,\ldots,T-1]}\right)
  \end{pmatrix}
  \in\mathbb{R}^{S\times n}.
\end{align}
From \eqref{eq:var}, we see that by construction of the randomness, the decision at stage\;$1$ must be the same for all the scenarios. Indeed, as no random variables have been observed, the user does not have any information about the scenarios. In the same vein, given the specificity of these random variables, an important feature of finite multistage problems is that if two scenarios $s_1$ and $s_2$ coincide up to stage\;$t-1$ (i.e. $\xi^{s_1}_{[1,t-1]} =\xi^{s_2}_{[1,t-1]}$), then the obtained decision variables must be equal up to stage\;$t$ (i.e. $(\var_1,\var_2(\xi_1^{s_1}),\ldots,\var_{t}( \xi^{s_1}_{[1,t-1]} ))=(\var_1,\var_2(\xi_1^{s_2})\ldots,\var_t(\xi^{s_2}_{[1,t-1]}))$).
These constraints are called \emph{non-anticipativity}. Geometrically these constraints define a subspace of $\mathbb{R}^{S\times n}$ that we denote by \begin{align}
    \label{eq:nonanticipativity}
    \NA \!=\!\left\{\var \in\mathbb{R}^{S\times n}\!: \forall s_1,s_2
    \left|
    \begin{array}{l}
          \var_1^{s_1} = \var_1^{s_2}  ~(t=1)\\
    ~~\text{ and } \\
    \var_t^{s_1}\!= \var_t^{s_2} \text{ if } \xi^{s_1}_{[1,t-1]}\!=   \xi^{s_2}_{[1,t-1]}~(t\geq2)\!\!
    \end{array} \right.
     \right\} {,}
\end{align}
where we denote $\var_t^s\in\mathbb{R}^{n_t}$ the decision variable for scenario $s$ at stage~$t$.
We see that the non-anticipativity constraints lead to a variable $\var$ with a block structure as depicted in Figure\;\ref{fig:block_structure_matrix}.
\begin{figure}[ht!]
\centering
\begin{tikzpicture}[]

\draw[thick, ->] (1.5,0) -- (9,0) node [below] {Stages};
    \foreach \x in {1,...,4}
    \draw (2*\x, 0.1) -- node[pos=0.5,below] {\x} (2*\x, -0.1) ;

    \draw[white!50!black, rounded corners, thick, fill = white!60!black] (1.6,0.55) rectangle (2.4,4.45);
    
    \draw[white!50!black, rounded corners, thick, fill = white!60!black] (3.6,0.55) rectangle (4.4,2.45);
    \draw[white!50!black, rounded corners, thick, fill = white!60!black] (3.6,2.55) rectangle (4.4,4.45);
    
    \draw[white!50!black, rounded corners, thick, fill = white!60!black] (5.6,0.55) rectangle (6.4,1.45);
    \draw[white!50!black, rounded corners, thick, fill = white!60!black] (5.6,1.55) rectangle (6.4,2.45);
    \draw[white!50!black, rounded corners, thick, fill = white!60!black] (5.6,2.55) rectangle (6.4,3.45);
    \draw[white!50!black, rounded corners, thick, fill = white!60!black] (5.6,3.55) rectangle (6.4,4.45);

    \draw[thick, white!80!black] (2,2.5) -- (4,1.5) ;
    \draw[thick, white!80!black] (2,2.5) -- (4,3.5) ;
    
        \draw[thick, white!80!black] (4,1.5) -- (6,1) ;
    \draw[thick, white!80!black] (4,1.5) -- (6,2)  ;
    \draw[thick, white!80!black]  (4,3.5) -- (6,3)  ;
    \draw[thick, white!80!black]  (4,3.5) -- (6,4)  ;

    \draw[thick, white!80!black]  (6,1) -- (8,0.75);
    \draw[thick, white!80!black]  (6,1) -- (8,1.25);
    
    \draw[thick, white!80!black]  (6,2) -- (8,1.75);
    \draw[thick, white!80!black]  (6,2) -- (8,2.25);
    
    \draw[thick, white!80!black]  (6,3) -- (8,2.75);
    \draw[thick, white!80!black]  (6,3) -- (8,3.25);
    
        \draw[thick, white!80!black]  (6,4) -- (8,3.75);
    \draw[thick, white!80!black]  (6,4) -- (8,4.25);

    \fill[pattern=crosshatch dots,pattern color = white!80!black] (2.75,0.25) rectangle (3.25,4.5);
    \node[above, white!70!black] at (3,0.25) {$\xi_1$};
    
        \fill[pattern=crosshatch dots,pattern color = white!80!black] (4.75,0.25) rectangle (5.25,4.5);
            \node[above, white!70!black] at (5,0.25) {$\xi_2$};
        
            \fill[pattern=crosshatch dots,pattern color = white!80!black] (6.75,0.25) rectangle (7.25,4.5);
                \node[above,  white!70!black] at (7,0.25) {$\xi_3$};
    
    \node[draw,white!80!black,circle,fill=white,thick, scale = 1.0]  at (2,2.5) {};
    
    \node[draw,white!80!black,circle,fill=white,thick, scale = 1.0]  at (4,1.5) {};
    \node[draw,white!80!black,circle,fill=white,thick, scale = 1.0]  at (4,3.5) {};

    \node[draw,white!80!black,circle,fill=white,thick, scale = 1.0]  at (6,1.0) {};
    \node[draw,white!80!black,circle,fill=white,thick, scale = 1.0]  at (6,2.0) {};
    \node[draw,white!80!black,circle,fill=white,thick, scale = 1.0]  at (6,3.0) {};
    \node[draw,white!80!black,circle,fill=white,thick, scale = 1.0]  at (6,4.0) {};

    \node[draw,white!65!black,circle,fill=white,thick, scale = 1.0]  at (8,0.75) {};
    \node[right]  at (8.2,0.75) {Scenario $8$};
    \node[draw,white!65!black,circle,fill=white,thick, scale = 1.0]  at (8,1.25) {};
    \node[right]  at (8.2,1.25) {Scenario $7$};
    
    \node[draw,white!65!black,circle,fill=white,thick, scale = 1.0]  at (8,1.75) {};
    \node[right]  at (8.2,1.75) {Scenario $6$};
    \node[draw,white!65!black,circle,fill=white,thick, scale = 1.0]  at (8,2.25) {};
    \node[right]  at (8.2,2.25) {Scenario $5$};
    
    \node[draw,white!65!black,circle,fill=white,thick, scale = 1.0]  at (8,2.75) {};
    \node[right]  at (8.2,2.75) {Scenario $4$};
    \node[draw,white!65!black,circle,fill=white,thick, scale = 1.0]  at (8,3.25) {};
    \node[right]  at (8.2,3.25) {Scenario $3$};
    
    \node[draw,white!65!black,circle,fill=white,thick, scale = 1.0]  at (8,3.75) {};
    \node[right]  at (8.2,3.75) {Scenario $2$};
    \node[draw,white!65!black,circle,fill=white,thick, scale = 1.0]  at (8,4.25) {};
    \node[right]  at (8.2,4.25) {Scenario $1$};

    \node  at (2,0.75) {\footnotesize $\var_1^8$};
    \node  at (2,1.25) {\footnotesize $\var_1^7$};
    \node  at (2,1.75) {\footnotesize $\var_1^6$};
    \node  at (2,2.25) {\footnotesize $\var_1^5$};
    \node  at (2,2.75) {\footnotesize $\var_1^4$};
    \node  at (2,3.25) {\footnotesize $\var_1^3$};
    \node  at (2,3.75) {\footnotesize $\var_1^2$};
    \node  at (2,4.25) {\footnotesize $\var_1^1$};
    
    \node  at (4,0.75) {\footnotesize $\var_2^8$};
    \node  at (4,1.25) {\footnotesize $\var_2^7$};
    \node  at (4,1.75) {\footnotesize $\var_2^6$};
    \node  at (4,2.25) {\footnotesize $\var_2^5$};
    \node  at (4,2.75) {\footnotesize $\var_2^4$};
    \node  at (4,3.25) {\footnotesize $\var_2^3$};
    \node  at (4,3.75) {\footnotesize $\var_2^2$};
    \node  at (4,4.25) {\footnotesize $\var_2^1$};
    
    \node  at (6,0.75) {\footnotesize $\var_3^8$};
    \node  at (6,1.25) {\footnotesize $\var_3^7$};
    \node  at (6,1.75) {\footnotesize $\var_3^6$};
    \node  at (6,2.25) {\footnotesize $\var_3^5$};
    \node  at (6,2.75) {\footnotesize $\var_3^4$};
    \node  at (6,3.25) {\footnotesize $\var_3^3$};
    \node  at (6,3.75) {\footnotesize $\var_3^2$};
    \node  at (6,4.25) {\footnotesize $\var_3^1$};
    
    \node  at (8,0.75) {\footnotesize $\var_4^8$};
    \node  at (8,1.25) {\footnotesize $\var_4^7$};
    \node  at (8,1.75) {\footnotesize $\var_4^6$};
    \node  at (8,2.25) {\footnotesize $\var_4^5$};
    \node  at (8,2.75) {\footnotesize $\var_4^4$};
    \node  at (8,3.25) {\footnotesize $\var_4^3$};
    \node  at (8,3.75) {\footnotesize $\var_4^2$};
    \node  at (8,4.25) {\footnotesize $\var_4^1$};

\end{tikzpicture}
\caption{Structure of the non-anticipativity constraints corresponding to the 4- {s}stage stochastic problem depicted in Fig.~\ref{fig:proba_tree}. All variables in a dark gray rectangle have to be equal.\label{fig:block_structure_matrix}}
\end{figure}
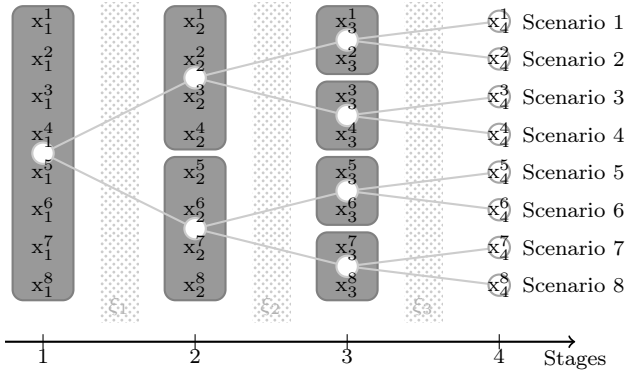

For each scenario $s\in\{1,\ldots,S\}$, let us denote by $f^s(\var^s) = f(\var^s,\xi^s)$ the cost of the decision $\var^s$. To simplify notation, we consider that possible constraints are incorporated in the cost: minimizing a function $\tilde{f}^s$ over a constraint set $\mathcal{C}^s$ is the same as minimizing $f^s = \tilde{f}^s + \iota_{\mathcal{C}^s}$ over the full space (with the indicator function $\iota_{\mathcal{C}^s}$ defined by $\iota_{\mathcal{C}^s}(x) = 0$ if $x\in\mathcal{C}^s$ and $+\infty$ elsewhere). We consider such a constrained problem in our numerical experiments in Section\;\ref{sec:num}.

We consider the ``risk-neutral case'' where $\mathcal{R}$ is the expectation of the random cost $f^s(\var^s)$ (a ``risk-averse'' case is discussed later in Remark\;\ref{sec:cvar}). In this setting, Problem~\eqref{eq:stopb} rewrites as
\begin{align}\label{eq:pb}
    \min_{ \var \in \NA} ~~\sum_{s=1}^S p_s f^s(\var^s),
\end{align}
which will be our target problem in this paper. The difficulty of this problem comes from the fact that there is an exponentially growing number of scenarios (e.g. in a binary tree,  {$S=2^{T-1}$}), all linked by the non-anticipativity constraints.

We finish presenting the set-up by formalizing our blanket assumptions on \eqref{eq:pb}.
\begin{assumption}
\label{hyp:gen}
The scenario probabilities are positive ($p_s>0$);
the  functions $f^s$ are convex, proper, and lower-semicontinuous; and there exists a solution to \eqref{eq:pb}.
\end{assumption}

\begin{assumption}
\label{hyp:gen2}
The subspace\;$\NA$ defined in\;\eqref{eq:nonanticipativity} intersects the relative interior of the domain of the objective function.
\end{assumption}

The convexity in Assumption\;\ref{hyp:gen} is used for the convergence analysis (see e.g.\;\cite[Chap.\;3]{ruszczynski2003stochastic}).
The technical assumption\;\ref{hyp:gen2}
is the standard non-degeneracy assumption in the multi-stage stochastic programming (see e.g.\;(9.17) in\;\cite[Chap.\;3]{ruszczynski2003stochastic}) which enables the splitting between scenarios and constraints.

\begin{remark}[Risk-averse variant]
\label{sec:cvar}
Though we consider in~\eqref{eq:pb} a risk-neutral model, this formulation naturally extends to risk-averse models, for which ``worst'' scenarios are particularly important to take into account. A popular risk-averse measure is the so-called Conditional Value at risk or CVar (see e.g.\;\cite{rockafellar2018superquantile}).
Following the idea of \cite{rockafellar2018solving}, the risk-averse problem
$$ \min_{\var \in \NA} ~\cvar_p(f(\var(\xi), \xi))$$
can be cast in the same form as \eqref{eq:pb}. %
\qed
\end{remark}

\section{Progressive Hedging: algorithm and sequential randomization}\label{sec:algorithms}

This section presents %
an efficient randomization of Progressive Hedging for solving the multi-stage stochastic problem~\eqref{eq:pb}. We start with recalling the usual Progressive Hedging algorithm and discussing its practical implementation. Then, we propose a \emph{sequential} randomized variant, %
which is a single-thread method, as the standard Progressive Hedging, but with cheap iterations.

This variant, as well as the other upcoming methods of the next section, is based on the operator view of Progressive Hedging (see e.g.\;the textbook\;\cite[Chap.~3.9]{ruszczynski2003stochastic}). More precisely, Progressive Hedging corresponds to the Douglas-Rachford splitting on the subgradient of the dual problem, much like the Alternating Direction Method of Multipliers\;(ADMM); see\;\cite{lions1979splitting}. We refer to \cite{eckstein1992douglas} for a formal link between Douglas-Rachford and ADMM, and  to\;\cite[Chap.~3.9]{ruszczynski2003stochastic} for a formal link between Douglas-Rachford and Progressive Hedging. However, no knowledge on fixed-point theory is required to read this section; we postpone the derivation of the algorithms and the proofs of the theorems in Appendix\;\ref{sec:OPPH}.

\subsection{Progressive Hedging}\label{sec:standard}

Progressive Hedging is a popular decomposition method for solving \eqref{eq:pb} by decoupling the objective (separable among the scenarios) and the constraints (linking scenarios). The method alternates between two steps: i) solving $S$ subproblems (one for each scenario, corresponding to the minimizing $f^s$ plus a quadratic function) independently; ii) projecting onto the non-anticipativity constraint. In order to properly define this second step, it is convenient to define the \emph{bundle}  $\mathcal{B}^s_t$ of the scenarios that are indistinguishable from scenario $s$ at time $t$, i.e.
\begin{align*}
    \mathcal{B}^s_t &= \left\{ \sigma\in\{1,\ldots,S\} :   \xi^{s}_{[1,t-1]} =   \xi^{\sigma}_{[1,t-1]}   \right\}  \\
    \nonumber  &=  \left\{ \sigma\in\{1,\ldots,S\} :  \var\in\NA \Rightarrow \var^{s}_{t} =   \var^{\sigma}_{t}   \right\}  ~~~~ \text{ (see\;\eqref{eq:nonanticipativity}) } .
\end{align*}
Projecting onto the non-anticipativity constraints decomposes by stage and scenario, as an average over the corresponding bundle weighted by the scenarios probabilities.

\begin{Algorithm}
\begin{align*}
& \text{\bfseries Initialize: } \var^0 \in \NA, \vartwo^0 \in \NA^\perp, \mu>0  \\
& \text{\bfseries For } k = 0,1,\ldots ~ \text{\bfseries do:} \\
   &  \left\{
        \begin{array}{ll}
            \varvar^{k+1,s} = \argmin_{y\in\RR^n}\left\{ f^s(y) + \frac{1}{2\mu} \left\|y-\var^{k,s} + \mu \vartwo^{k,s} \right\|^2  \right\} \text{ for all } s=1,\ldots,S & \\
            \var_t^{k+1,s} = \frac{1}{\sum_{\sigma\in\mathcal{B}^s_t} p_\sigma } \sum_{\sigma\in\mathcal{B}^s_t} p_\sigma \varvar_t^{k+1,\sigma}  \text{ for all } s=1,\ldots,S \text{ and } t=1,\ldots,T  \\
            \vartwo^{k+1} = \vartwo^{k} + \frac{1}{\mu} (\varvar^{k+1}-\var^{k+1}) &
        \end{array}
    \right.  \\
    & \text{\bfseries Return: } \var^k
\end{align*}
\caption{Progressive Hedging \label{alg:PH}}
\end{Algorithm}

Algorithm\;\ref{alg:PH} presents the Progressive Hedging algorithm; its derivation from the reformulation of \eqref{eq:pb} as fixed-point problem is recalled in Appendix~\ref{sec:OPPH}. This appendix also explains how the convergence of the algorithm can be obtained as an application of existing results for fixed-point algorithms. Here we only formalize the convergence result, and discuss further some implementation details.

\begin{theorem}\label{th:ph}
Consider the multistage problem \eqref{eq:pb} verifying Assumption\;\ref{hyp:gen}\;and\;\ref{hyp:gen2}. Then, the sequence $(\var^k)$ generated by Algorithm\;\ref{alg:PH}
is feasible ($\var^k\in\NA$ for all $k$) and
converges to an optimal solution of\;\eqref{eq:pb}.
\end{theorem}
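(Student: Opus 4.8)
The plan is to recast Problem~\eqref{eq:pb} as a monotone inclusion and to recognize Algorithm~\ref{alg:PH} as a Douglas--Rachford splitting applied to it, so that convergence follows from the classical theory of fixed-point iterations of firmly nonexpansive operators; feasibility, on the other hand, will be read off directly from the structure of the iteration. First I would equip $\RR^{S\times n}$ with the probability-weighted inner product $\langle \var,\varvar\rangle_p = \sum_{s=1}^S p_s \langle \var^s,\varvar^s\rangle$, which is positive definite since $p_s>0$ by Assumption~\ref{hyp:gen}. Writing $F(\var)=\sum_{s=1}^S p_s f^s(\var^s)$, Problem~\eqref{eq:pb} reads $\min_\var F(\var)+\iota_\NA(\var)$, whose optimality condition is the inclusion $0\in \partial F(\var)+N_\NA(\var)$, where $N_\NA$ is the normal cone to the subspace $\NA$. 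Two facts underpin this reformulation: since each $f^s$ is proper, convex and lower semicontinuous (Assumption~\ref{hyp:gen}), both $\partial F$ and $N_\NA$ are maximal monotone for $\langle\cdot,\cdot\rangle_p$; and Assumption~\ref{hyp:gen2} is exactly the relative-interior constraint qualification ensuring the subdifferential sum rule $\partial(F+\iota_\NA)=\partial F+N_\NA$, so that the zeros of the inclusion coincide with the solutions of~\eqref{eq:pb}, which exist by Assumption~\ref{hyp:gen}.

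The heart of the argument is to check, by a direct computation, that Algorithm~\ref{alg:PH} is the Douglas--Rachford iteration solving $0\in\partial F(\var)+N_\NA(\var)$ with step $\mu$ and splitting variable $\varglob^k=\var^k-\mu\vartwo^k$, the proximal step being taken first and the projection second. The two building blocks are identified as follows: in $\langle\cdot,\cdot\rangle_p$ the resolvent $J_{\mu\partial F}=\proxi_{\mu F}$ decouples across scenarios into the $S$ independent subproblems of the first line (the weight $p_s$ factors out of each scalar minimization), while $J_{\mu N_\NA}=\proj_\NA$ is precisely the probability-weighted bundle average of the second line. Maintaining the invariants $\var^k\in\NA$ and $\vartwo^k\in\NA^\perp$, one verifies $\varvar^{k+1}=\proxi_{\mu F}(\varglob^k)$, then uses the linearity of $\proj_\NA$ together with $\proj_\NA \varglob^k=\var^k$ to show that the reflected projection $\proj_\NA(2\varvar^{k+1}-\varglob^k)$ equals $2\var^{k+1}-\var^k$, and that the dual step reproduces exactly the Douglas--Rachford update $\varglob^{k+1}=\varglob^k+\proj_\NA(2\varvar^{k+1}-\varglob^k)-\varvar^{k+1}$.

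With this identification in hand, convergence follows from the standard Douglas--Rachford theorem: the shadow sequence $\varvar^{k+1}=J_{\mu\partial F}(\varglob^k)$ converges (in finite dimension, strongly) to a zero $\var^\star$ of $\partial F+N_\NA$, hence to a solution of~\eqref{eq:pb} by the sum rule. Since any such zero lies in $\NA$ and $\var^{k+1}=\proj_\NA\varvar^{k+1}$, the returned sequence satisfies $\var^k=\proj_\NA\varvar^k\to\proj_\NA\var^\star=\var^\star$. Feasibility is immediate and does not even require the convergence analysis: $\var^0\in\NA$ and every later $\var^{k+1}$ is an average over bundles, hence lies in $\NA$ by the definition~\eqref{eq:nonanticipativity}. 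I expect the main obstacle to be the bookkeeping of the equivalence in the preceding paragraph---matching the dual variable $\vartwo$ with the Douglas--Rachford splitting variable and handling the weighted inner product consistently through the prox, the projection and the reflection---rather than the invocation of the convergence theorem, which is entirely off-the-shelf.
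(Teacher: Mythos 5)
Your proposal is correct and follows essentially the same route as the paper's own proof in Appendix~\ref{sec:OPPH}: reformulate \eqref{eq:pb} as the inclusion $0\in\A(\var)+\B(\var)$ for the maximal monotone operators of \eqref{eq:op} under the weighted inner product, identify the resolvents (scenario-decoupled prox and weighted bundle-average projection, the paper's Lemma~\ref{lem:AB}), verify that the iteration matches Algorithm~\ref{alg:PH} using the invariants $\var^k\in\NA$, $\vartwo^k\in\NA^\perp$, and invoke the off-the-shelf Douglas--Rachford convergence theorem. The only cosmetic difference is your ordering of the composition (prox first, so convergence of $(\var^k)$ is obtained by projecting the shadow sequence), whereas the paper applies $\mathsf{O}_{\mu\B}$ first and reads $\var^k=\mathsf{J}_{\mu\B}(\varglob^k)$ directly; both are equivalent.
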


The costly operation in Algorithm\;\ref{alg:PH} is the update of the variable $\varvar$ which consists in a ``proximal'' operation for \emph{all} scenarios. In general, there is no closed form expression for this operation, and thus it has to be obtained by a nonlinear optimization solver\footnote{In our toolbox, we solve these problems with IPOPT  {\cite{wachter2006implementation}}, an open source software package for nonlinear optimization.}. For instance, when the scenario costs $(f^s)$ are (constrained) linear or quadratic functions, this operation amounts to solving $S$ (constrained) quadratic programs.

The update of the variable $\var$ consists in a projection onto the non-anticipativity constraints $\NA$. Though it is rather cheap to compute, it involves the variables $(\varvar^s)$ of all the scenarios. Thus, the update of $\varvar$ has to be completely executed before updating\;$\var$, resulting in a potential computational bottleneck. Our upcoming randomized variant is aimed at alleviating this practical drawback.

Finally, concerning the initialization of the method, $\vartwo^0 \in \NA^\perp$ is primordial for convergence, and $\vartwo^0 = 0$ is a safe choice. The hyperparameter $\mu>0$ (also present in ADMM) controls the relative decrease of the primal and dual error; for the specific structure of Problem~\eqref{eq:pb}, $\mu=1$ seem to be an acceptable choice in most situations.

 {
\begin{remark}[About leaf nodes] \label{rem:leaf}
The \emph{leaf nodes} correspond to the variables $\var_T^s$ i.e. the decisions at the last stage $T$ for all the scenarios (see Fig.~\ref{fig:block_structure_matrix}). These variables are particular in the optimization problem since they are not linked by the non-anticipativity constraints. This implies that the scenario bundle for scenario $s$ at the last stage $T$ is reduced to the singleton $\mathcal{B}_T
^s = \{s\}$. Hence, the weighted average in the update of\;$\var$ in Algorithm~\ref{alg:PH} reduces to $\var_T^{k+1,s} = \varvar_T^{k+1,s}$, %
and as a consequence
the dual variable %
stays unchanged $\vartwo^k_T = \vartwo^0_T$. %
This modification was present in the original Progressive Hedging algorithm by Rockafellar and Wets \cite{rockafellar1991scenarios} since it allows to reduce the storage cost by getting rid of the dual variable corresponding to leaf nodes. This does not hold anymore for randomized versions that we present in this paper. So we choose not to display this modification for simplicity and better compliance with the textbook \cite{ruszczynski2003stochastic}.
\qed
\end{remark}
}

\subsection{Randomized Progressive Hedging}\label{sec:rand}

Using the interpretation of Progressive Hedging as a fixed-point algorithm (detailed in appendix\;\ref{sec:OPPH}) and results on randomized ``coordinate descent'' fixed-point methods (see e.g.\;\cite{iutzeler2013asynchronous}), we obtain a randomized version of Progressive Hedging. This randomized method consists in updating only a subset of the coordinates at each iteration, corresponding to only \emph{one} scenario, and leaving the other unchanged.  {By doing so, each iterations is much less demanding computationally than one of Progressive Hedging (roughly $S$ time quicker) since it involves the resolution of one sub-problem compared to $S$. However, as commonly observed with randomized optimization methods, Algorithm~\ref{alg:RPH} will take more iterations to converge but usually less than $S$ times more due to the progressive improvement brought by each scenario information. Thus, the Randomized Progressive Hedging should in general outperform the Progressive Hedging computationally, with the other advantage that many more iterations are produced per time which can be very useful in practice.} Deriving such a  method requires a special care as the operator links the variables with each other; these derivations are provided in Appendix~\ref{apx:rph}.

\begin{Algorithm}
\begin{align*}
& \text{\bfseries Initialize: } \varglob^0 = \var^0  \in \NA, \mu>0 \\
& \text{\bfseries For } k = 0,1,\ldots ~ \text{\bfseries do:} \\
 &   \left\{
        \begin{array}{ll}
           \text{Draw a scenario } \selected \in \{1,\ldots,S\} \text{ with probability } \mathbb{P}[\selected = s] = q_s  & \\[1.5ex]
            \displaystyle\var_t^{k+1,\selected} = \frac{1}{\sum_{\sigma\in\mathcal{B}^{\selected}_t} p_\sigma } \sum_{\sigma\in\mathcal{B}^{\selected}_t} p_\sigma \varglob_t^{k,\sigma}  \text{ for all }  t=1,\ldots,T  &
            {\scriptstyle \text{projection only on}} \\[-0.6cm]
            &
            {\scriptstyle \text{constraints involving } \selected } \\[0.3cm]
            \varvar^{k+1,\selected} = \argmin_{y\in\RR^n}\left\{ f^{\selected}(y) + \frac{1}{2\mu} \left\|y- 2\var^{k+1,\selected} + \varglob^{k,\selected} \right\|^2  \right\}& {\scriptstyle \text{ optimization sub-problem}} \\[-0.3cm]
            & {\scriptstyle \text{ only concerning scenario } \selected } \\[0.3cm]
            \left| \begin{array}{l}
                \varglob^{k+1,\selected} = \varglob^{k,\selected} + \varvar^{k+1,\selected} - \var^{k+1,\selected}  \\
            \varglob^{k+1,s} =  \varglob^{k,s} \text{ for all } s\neq \selected
            \end{array}
             \right.
            &
        \end{array}
    \right. \\
    & \text{\bfseries Return: } \tilde{\var}^{k+1} = \frac{1}{\sum_{\sigma\in\mathcal{B}^s_t} p_\sigma } \sum_{\sigma\in\mathcal{B}^s_t} p_\sigma \varglob_t^{k+1,\sigma}  \text{ for all } s=1,\ldots,S \text{ and } t=1,\ldots,T
\end{align*}
\caption{Randomized Progressive Hedging \label{alg:RPH}}
\end{Algorithm}

At iteration $k$, our Randomized Progressive Hedging (Algorithm~\ref{alg:RPH}) samples one scenario $\selected$ (randomly among all with probabilities $(q_s)$) and then alternates between
the projection over the non-anticipativity constraints $\NA$ \emph{associated with  $\selected$} (the full projection on $\NA$ is not necessary\footnote{The full projection can be performed anyway but the variables that are not associated with  $\selected$ will not be taken into account by the algorithm anyhow.}) and the ``proximal'' operation over the selected scenario\;$\selected$, together with an update of the main variable $z$. Since a single scenario is involved in the iteration, this algorithm is naturally adapted to single-thread implementations and its incremental nature makes it computationally more efficient than the Progressive Hedging, to almost no additional implementation complications.

Finally, notice that since only the partial projection on the non-anticipativity constraints related to this scenario is needed to perform an iteration, the sequence $(\var^k)$, although converging to the sought solution, does not verify the non-anticipativity constraints. That is why the output of the algorithm has to be eventually projected onto the (full) non-anticipativity constraints (which introduces variable $\tilde \var^k$).

The convergence of this randomized version is formalized by  Theorem\;\ref{th:rph} and proved in Appendix~\ref{apx:rph}.

\begin{theorem}
\label{th:rph}
Consider a multistage problem \eqref{eq:pb} verifying Assumptions\;\ref{hyp:gen}\;and\;\ref{hyp:gen2}.  Then, the sequence $(\tilde{\var}^k)$ generated by Algorithm\;\ref{alg:RPH} is feasible (\,$\tilde{\var}^k\!\in\!\NA$~a.s.\;for\;all\;$k$) and converges almost surely to a solution of~\eqref{eq:pb}.
\end{theorem}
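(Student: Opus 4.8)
The plan is to recognize Algorithm~\ref{alg:RPH} as a randomized block-coordinate fixed-point iteration and then to invoke a known convergence result for such schemes. As recalled in Appendix~\ref{sec:OPPH}, Progressive Hedging is the Douglas--Rachford method; written in the variable $\varglob$ of Algorithm~\ref{alg:RPH} it reads $\varglob^{k+1} = T\varglob^k$ with the firmly nonexpansive operator
$T = \Id + \proxi_{\mu f}\circ(2\proj_{\NA} - \Id) - \proj_{\NA}$ on $\E$ (equipped with the $p$-weighted inner product), built from the separable proximal operator of $f = \sum_s p_s f^s$ and the projection onto $\NA$. Under Assumptions~\ref{hyp:gen} and~\ref{hyp:gen2}, $\operatorname{Fix}(T)$ is nonempty and, for every $\varglob^\star \in \operatorname{Fix}(T)$, the point $\proj_{\NA}(\varglob^\star)$ solves~\eqref{eq:pb}; I would take these facts from the appendix.

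The crux is to exploit the scenario block structure $\E = \bigoplus_{s=1}^S \RR^n$, the $s$-th block being the row $\varglob^s$. I would verify that the $s$-th block $[T\varglob]^s$ depends on $\varglob$ only through $\varglob^s$ and through the single row $[\proj_{\NA}\varglob]^s$, which is exactly the $p$-weighted average of $\varglob$ over the bundles containing $s$. Consequently, evaluating one block of $T$ requires only the \emph{partial} projection associated with the drawn scenario together with the proximal step on $f^{\selected}$---precisely the three inner lines of Algorithm~\ref{alg:RPH}. This shows that one iteration realizes the randomized block update $\varglob^{k+1,\selected} = [T\varglob^k]^{\selected}$ and $\varglob^{k+1,s} = \varglob^{k,s}$ for $s \neq \selected$, with $\selected$ drawn with positive probabilities $(q_s)$.

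With this identification, I would apply the convergence theory for randomized coordinate-descent fixed-point iterations of a firmly nonexpansive operator developed in~\cite{iutzeler2013asynchronous}. The argument proceeds by a stochastic quasi-Fej\'er analysis in the norm reweighted by $(1/q_s)$: conditioning on the past $\mathcal{F}_k$, one shows $\expectation[\|\varglob^{k+1} - \varglob^\star\|_q^2 \mid \mathcal{F}_k] \le \|\varglob^k - \varglob^\star\|_q^2 - c\,\|\varglob^k - T\varglob^k\|^2$ for some $c>0$ and every fixed point $\varglob^\star$. Supermartingale convergence then yields both that $\|\varglob^k - \varglob^\star\|_q$ converges almost surely and that the residual $\|\varglob^k - T\varglob^k\|$ tends to $0$; demiclosedness of $\Id - T$ at $0$ identifies every cluster point of $(\varglob^k)$ as a fixed point, and the quasi-Fej\'er property upgrades this to almost-sure convergence of the whole sequence to a single $\varglob^\star \in \operatorname{Fix}(T)$.

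It remains to transfer the conclusion to $(\tilde{\var}^k)$. The return formula in Algorithm~\ref{alg:RPH} is nothing but $\tilde{\var}^k = \proj_{\NA}(\varglob^k)$, so $\tilde{\var}^k \in \NA$ for every $k$ by construction, which gives feasibility. Since $\proj_{\NA}$ is linear and continuous, the almost-sure convergence $\varglob^k \to \varglob^\star$ passes through to $\tilde{\var}^k \to \proj_{\NA}(\varglob^\star)$, a solution of~\eqref{eq:pb} by the correspondence above. The main obstacle I anticipate is the second step: because $\proj_{\NA}$ couples scenarios through the bundles, $T$ is \emph{not} separable across blocks, and care is needed to confirm both that a single block of $T$ is computable from the partial projection alone and that $T$ is firmly nonexpansive in the $p$-weighted geometry---only then do the hypotheses of the randomized fixed-point theorem genuinely apply.
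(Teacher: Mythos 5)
Your proposal is correct and takes essentially the same route as the paper's Appendix proof: your operator $T = \Id + \proxi_{\mu f}\circ(2\proj_{\NA} - \Id) - \proj_{\NA}$ is exactly the averaged map $\tfrac{1}{2}(\mathsf{O}_{\mu\A}\circ\mathsf{O}_{\mu\B} + \Id)$ built from the reflected resolvents of Lemma~\ref{lem:AB} (firmly nonexpansive in the $P$-weighted inner product), your block-separability observation reproduces the paper's pruning derivation of Algorithm~\ref{alg:RPH} from the randomized Douglas--Rachford iteration \eqref{eq:RDR}, and convergence is obtained from the same randomized fixed-point result of \cite{iutzeler2013asynchronous}. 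The concluding transfer via $\tilde{\var}^k = \mathsf{J}_{\mu\B}(\varglob^k) = \proj_{\NA}(\varglob^k)$, yielding feasibility by construction and almost-sure convergence to a solution by continuity of the projection, is likewise how the paper finishes in Proposition~\ref{th:RDR}.
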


In practice, the initialization and parameters are similar to those of Progressive Hedging to the exception of the probabilities $(q_s)$. Two natural choices come to mind:
\begin{itemize}
    \item \emph{uniform sampling}: taking the same probability $q_s = 1/S$ for all scenarios;
      \item \emph{p sampling}: taking $q_s = p_s$ and thus sampling more the scenarios with a greater weight in the objective.
\end{itemize}
Finally, in terms of implementation, this algorithm is by nature \emph{sequential} in the sense that one scenario is sampled, treated, and then incorporated in the master variable. Thus, it suits well single thread setups but is not directly able to benefit from multiple workers. Such an extension is the goal of the next section.

\section{Parallel variants of Progressive Hedging}

In this section, we discuss the deployment of (variants of) Progressive Hedging algorithms on parallel computing systems. No particular knowledge about distributed computing is required. We consider a generic master-worker framework where $M$ workers collaboratively solve \eqref{eq:pb} under the orchestration of a master. This setting encompasses diverse practical situations such as multiple threads in a machine or multiple machines in a computing cluster (workers can then be threads, machines, agents, oracles, etc.).  {Our aim is to provide parallel methods that speed-up the resolution of medium-to-large multi-stage stochastic programs by using a distributed computing system. Fully scalable implementations are problem/system dependent; instead, we} take a higher level of abstraction and consider that a worker is a computing procedure  {that is able to solve any given subproblem}. In practice, the algorithms implemented in our toolbox do not need to know the computing system, as they automatically adapt the underlying computing system, thanks to parallelization abilities of the Julia language.

\subsection{Parallel Progressive Hedging}\label{sec:par}

The randomized method presented in Section~\ref{sec:rand} is based on the sampling of one scenario per iteration. Using the same reasoning, it is possible to produce an algorithm sampling $M\leq S$ scenarios per iteration. These $M$ scenarios can then be treated in parallel by $M$ workers and then sent to the master for incorporation in the master variable. This algorithm, completely equivalent to the Randomized Progressive Hedging (Algorithm~\ref{alg:RPH}) can be formulated in a master-worker setup as follows.

\begin{Algorithm}[!h]
\begin{center}
{\footnotesize
\begin{worker}{Master}
\vspace*{-3mm}
\begin{align*}
& \text{\bfseries Initialize: } \var^0=\varglob^0 \in \NA, \mu>0  \\
& \text{\bfseries For } k = 0,1,\ldots ~ \text{\bfseries do:} \\
&    \left\{
        \begin{array}{ll}
         \text{Draw $M$ scenarios } (s[1],..,s[M]) \in \{1,\ldots,S\}^M \text{ with probability } \mathbb{P}[s[i] = s] = q_s  &   \\
         \var^{k+1,s}_{t} = \frac{1}{\sum_{\sigma\in\mathcal{B}^{s}} p_\sigma } \sum_{\sigma\in\mathcal{B}^{s}_t} p_\sigma \varglob_t^{k,\sigma}  \text{ for all }  t=1,\ldots,T   \text{ and } s \in (s[1],..,s[M]) &  \\
         \text{\underline{Send} a scenario/point pair } (s[i],  {2 \var^{k+1,s[i]} -  \varglob^{k,s[i]} }  ) \text{ to \emph{each} worker } i=1,..,M  & \\
         \text{\underline{Receive} } \varvar^{s[i]}  \text{ from \emph{all} workers } i=1,..,M  & \\
                    \left| \begin{array}{l}
                \varglob^{k+1,s} = \varglob^{k,s} +  \varvar^{s} -   {\var^{k+1,s} } \text{ for all } s \in (s[1],..,s[M])  \\
            \varglob^{k+1,s} =  \varglob^{k,s} \text{ for all } s \notin (s[1],..,s[M])
            \end{array}
             \right. & \\
        \end{array}
    \right. \\
    & \text{\bfseries Return: }  \tilde{\var}^{k+1} = \frac{1}{\sum_{\sigma\in\mathcal{B}^s_t} p_\sigma } \sum_{\sigma\in\mathcal{B}^s_t} p_\sigma  {\varglob}_t^{k+1,\sigma}  \text{ for all } s=1,\ldots,S \text{ and } t=1,\ldots,T
\end{align*}
\end{worker}
\vspace*{-3mm}
\begin{worker}{Worker $i$}
\vspace*{-3mm}
\begin{align*}
& \text{\bfseries As soon as a scenario/point pair is received:} \\
&    \left\{
        \begin{array}{ll}
        \text{\underline{Receive} scenario/point pair } (s[i],\varvartwo[i])   & \\
            \varvar^{s[i]} = \argmin_{y\in\RR^n}\left\{ f^{s[i]}(y) + \frac{1}{2\mu} \left\|y- \varvartwo[i] \right\|^2  \right\} &  \\
            \text{\underline{Send} } \varvar^{s[i]}  \text{ to the Master }  & \\
        \end{array}
    \right.
\end{align*}
\end{worker}}
\end{center}
\caption{Parallel Randomized Progressive Hedging\label{alg:RPHPar}}
\end{Algorithm}

This algorithm presents a simple, yet rather efficient, parallel method to solve multistage stochastic problems based on Progressive Hedging. When the difficulty of the subproblems is highly variable (due to different sizes, data, or initialization), the Progressive Hedging has to wait for the slowest subproblem to be solved, in order to complete an iteration. This is not the case anymore for the parallel randomized variant. However, if the \emph{computing system} is heterogeneous, the parallel version still has to wait for the slowest worker, and thus workers may eventually have idle times. This drawback occurring for heterogeneous setups will  be alleviated in the next section by our asynchronous variant.

\subsection{Asynchronous Randomized Progressive Hedging}\label{sec:async}

In a parallel computing framework, the Parallel Randomized Progressive Hedging of the previous section can be further extended to generate asynchronous iterations (built on a slightly different randomized fixed-point method\;\cite{peng2016arock}, as detailed in Appendix~\ref{apx:arph}).

The resulting asynchronous Progressive Hedging (Algorithm\;\ref{alg:ARPH}) consists of the same steps per iteration as Algorithm\;\ref{alg:RPH}, but these steps are performed asynchronously by several workers in parallel. In this case, each of the workers asynchronously receives a global variable, computes an update associated with one randomly drawn scenario, then incorporates it to the master variable\footnote{We assume consistent writing, i.e. reading and writing do not clash with each other, extensions to inconsistent reads is discussed in \cite[Sec.\;1.2]{peng2016arock}}.

 {
\begin{Algorithm}[!h]
\begin{center}
{\footnotesize
\begin{worker}{Master}
\vspace*{-3mm}
\begin{align*}
& \text{\bfseries Initialize: } \var^0=\varglob^0 \in \NA, \mu>0, k=0,  \\
& \hspace*{1.55cm} \varold[j]=\var^{0,j} \text{ and } s[j]=j \text{ for every worker } i   \\
& \text{\underline{Send} the scenario/point pair } (s[j],\var[j]) \text{ to every worker } j  \\
& \text{\bfseries As soon as a worker finishes its computation:} \\
&    \left\{
        \begin{array}{ll}
         \text{\underline{Receive} } \varvar^{s[i]}  \text{ from an worker, say } i & \\
                    \left| \begin{array}{l}
                \varglob^{k+1,s[i]} = \varglob^{k,s[i]} + \frac{2 \eta^k}{S q_{s[i]}} \left(  \varvar^{s[i]} -  \varold[i] \right)  \\
            \varglob^{k+1,s\hphantom{[i]}} =  \varglob^{k,s} \text{ for all } s\neq s[i]
            \end{array}
             \right. & \\
           \text{Draw a new scenario for }i: s[i] \in \{1,\ldots,S\} \text{ with probability } \mathbb{P}[s[i] = s] = q_s  &   \\
        \varold[i] = \frac{1}{\sum_{\sigma\in\mathcal{B}^{s[i]}} p_\sigma } \sum_{\sigma\in\mathcal{B}^{s[i]}_t} p_\sigma \varglob_t^{k+1,\sigma}  \text{ for all }  t=1,\ldots,T  &  \\
             \text{\underline{Send} the scenario/point pair } (s[i],2 \varold[i] -  \varglob^{k+1,s[i]}) \text{ to worker } i  & \\
        k \leftarrow k+1 &
        \end{array}
    \right. \\
    & \text{\bfseries Return: }  \tilde{\var}^{k+1} = \frac{1}{\sum_{\sigma\in\mathcal{B}^s_t} p_\sigma } \sum_{\sigma\in\mathcal{B}^s_t} p_\sigma  {\varglob}_t^{k+1,\sigma}  \text{ for all } s=1,\ldots,S \text{ and } t=1,\ldots,T
\end{align*}
\end{worker}
\vspace*{-3mm}
\begin{worker}{Worker $i$}
\vspace*{-3mm}
\begin{align*}
& \text{\bfseries As soon as a scenario/point pair is received:} \\
&    \left\{
        \begin{array}{ll}
        \text{\underline{Receive} scenario/point pair } (s[i],\varvartwo[i])   & \\
            \varvar^{s[i]} = \argmin_{y\in\RR^n}\left\{ f^{s[i]}(y) + \frac{1}{2\mu} \left\|y- \varvartwo[i] \right\|^2  \right\} &  \\
            \text{\underline{Send} } \varvar^{s[i]}  \text{ to the Master }  & \\
        \end{array}
    \right.
\end{align*}
\end{worker}}
\end{center}
\caption{Asynchronous Randomized Progressive Hedging \label{alg:ARPH}}
\end{Algorithm}
}

Multiple updates may have occurred between the time of reading and updating.
 {
We denote by $\varold[i]$ (without any time index to avoid confusion) the value of $\var^{k,s}$ lastly used for feeding worker $i$. When the master performs an update from the information of worker $i$, $\varold[i]=\var^{k-d_k,s[i]}$ where $s[i]$ is the scenario treated by worker $i$ for that update and $d^k$ if the number of updates between $k$ and the last time worker $i$ performed an update\footnote{In Appendix~\ref{apx:arph}, following \cite{peng2016arock}, we denote by $\varold^k=\var^{k-d_k}$ if worker $i$ started its update at time $k-d_k$.}.
}
We assume here that this delay is uniformly bounded; this is a reasonable assumption for multi-core machines and computing clusters. This assumption allows to use the convergence analysis of\;\cite{peng2016arock} to establish the following convergence result. The proof of this result is given in Appendix~\ref{apx:arph}.  {The intuition behind the result is to use the maximal delay to take cautious stepsizes $\eta_k$, guaranteeing convergence of asynchronous updates.}

\begin{theorem}\label{th:arph}
Consider a multistage problem \eqref{eq:pb} verifying Assumption {s}\;\ref{hyp:gen}\;and\;\ref{hyp:gen2}.
We assume furthermore that the delays are bounded:
$d^k\leq \tau <\infty $ for all $k$.
If we take the stepsize $\eta^k$ as follows for some fixed  $0<c<1$
\begin{equation}\label{eq:bounddelay}
0< \eta_{\min} \leq \eta^k ~\leq~  \frac{c S q_{\min}}{2\tau \sqrt{q_{\min}} +1 }
\qquad \text{with $q_{\min} = \min_s q_s$},
\end{equation}
then, the sequence $(\tilde{\var}^k)$ generated by Algorithm\;\ref{alg:ARPH} is feasible ($\tilde{\var}^k\!\in\!\NA$~a.s.\;for all\;$k$) and converges almost surely to a random variable supported by the solution set of\;\eqref{eq:pb}.
\end{theorem}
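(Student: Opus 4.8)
The plan is to recast Algorithm~\ref{alg:ARPH} as the asynchronous coordinate fixed-point iteration (ARock) of~\cite{peng2016arock} applied to the Douglas--Rachford operator underlying Progressive Hedging, and then to invoke the ARock almost-sure convergence theorem under the stated delay and stepsize conditions. I would rely on the operator interpretation already used for Theorem~\ref{th:ph} and derived in Appendix~\ref{sec:OPPH}: there is a nonexpansive (indeed firmly nonexpansive) operator $\op{T}\colon\E\to\E$, the $\tfrac12$-averaged map $\op{T}=\tfrac12(\Id+\op{R})$ built from the reflection $\op{R}$ associated with the separable cost $\sum_s p_s f^s$ and with the subspace $\NA$, whose fixed-point set is nonempty under Assumptions~\ref{hyp:gen} and~\ref{hyp:gen2} and such that projecting any fixed point onto $\NA$ returns a solution of~\eqref{eq:pb}. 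Since the output map $\varglob\mapsto\tilde\var$ of Algorithm~\ref{alg:ARPH} is exactly that projection onto $\NA$, feasibility $\tilde\var^k\in\NA$ for all $k$ holds immediately.

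First I would exhibit the coordinate structure over scenarios. Because the cost is separable across scenarios and the reflection through $\NA$ only couples scenarios sharing a common bundle, the residual operator $\op{S}=\Id-\op{T}$ acts blockwise: its restriction to the block of scenario $s$ is computable from the values read on the scenarios of the bundle of $s$, and coincides with the increment $\varvar^{s}-\varold[i]$ assembled by the master/worker steps after the partial projection and the proximal evaluation. The reflected argument $2\varold[i]-\varglob^{k+1,s[i]}$ fed to worker $i$ is precisely what materializes $\op{R}$, and the factor $2$ in the master update $\varglob^{k+1,s[i]}=\varglob^{k,s[i]}+\frac{2\eta^k}{Sq_{s[i]}}(\varvar^{s[i]}-\varold[i])$ reconciles the $\tfrac12$-averaged map $\op{T}$ with the reflection $\op{R}$ used in ARock. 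Thus one update of worker $i$ is a single-block ARock step, with coordinate drawn according to $(q_s)$, importance weight $1/(Sq_{s[i]})$, relaxation $\eta^k$, and delayed read $\varold[i]$ equal to the block of $\varglob^{k-d^k}$.

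Next I would verify the hypotheses of the ARock theorem. The operator $\op{T}$ is nonexpansive as a Douglas--Rachford operator; the probabilities are positive with $q_{\min}=\min_s q_s>0$; the delays are uniformly bounded, $d^k\le\tau<\infty$; and the admissible stepsize range~\eqref{eq:bounddelay}, namely $0<\eta_{\min}\le\eta^k\le cSq_{\min}/(2\tau\sqrt{q_{\min}}+1)$, is exactly the bound of~\cite{peng2016arock} once the number of coordinates $S$, the sampling law $(q_s)$, and the maximal delay $\tau$ are substituted into their stepsize formula. Applying the ARock almost-sure convergence result then shows that $(\varglob^k)$ converges almost surely to a random variable supported on the fixed-point set of $\op{T}$; composing with the continuous projection onto $\NA$ transfers this to $(\tilde\var^k)$, which therefore converges almost surely to a random variable supported on the solution set of~\eqref{eq:pb}, as claimed.

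The main obstacle is the faithful translation of the Progressive Hedging algebra into a single, genuinely coordinate-separable ARock step, together with the bookkeeping of the asynchronous reads. Concretely, one must check that the partial projection (over the bundles of the drawn scenario only), the proximal step on the reflected argument, and the scaled master increment assemble exactly into the block $\op{S}_{s[i]}$ of $\Id-\op{T}$ evaluated at the delayed iterate $\varglob^{k-d^k}$, and that the untouched blocks are consistent with the read/write model assumed in~\cite{peng2016arock}. Once this operator-level identification is secured in Appendix~\ref{apx:arph}, matching~\eqref{eq:bounddelay} to the ARock stepsize bound and pushing the convergence through the projection are routine; the delicate point is precisely this equivalence.
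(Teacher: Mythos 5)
Your proposal is correct and follows essentially the same route as the paper's Appendix~\ref{apx:arph}: there the algorithm is identified with the ARock iteration applied to $\mathsf{S}=I-\mathsf{O}_{\mu\A}\circ\mathsf{O}_{\mu\B}$ (with the delayed read $\varglobold^k=\varglob^{k-d^k}$ and the factor $2$ absorbed into the master update), nonexpansiveness is supplied by Lemma~\ref{lem:AB} under Assumptions~\ref{hyp:gen} and~\ref{hyp:gen2}, \cite[Th.~3.7]{peng2016arock} gives almost-sure convergence of $(\varglob^k)$ under the delay bound and stepsize condition~\eqref{eq:bounddelay}, and continuity of $\mathsf{J}_{\mu\B}$ together with \eqref{eq:zero} transfers the conclusion to $\tilde{\var}^k=\mathsf{J}_{\mu\B}(\varglob^k)$, which also yields feasibility. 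The operator-level bookkeeping you flag as the delicate point is precisely the derivation carried out in the paper's appendix.
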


 {\begin{remark}[Extensions]
For sake of clarity, we reduce Algorithm\;\ref{alg:ARPH} to its simplest formulation with essential ingredients for asynchronous computation with guaranteed convergence. Several extensions and heuristics could be added; among them:
\begin{itemize}
    \item tuned $\eta_k$ (we test the simple strategy $\eta_k=1$ in our numerical experiments),
    \item adaptive $\mu$ (scenario or iteration-wise),
    \item sending multiple scenario/point pairs (instead of only one) to the updating worker.\qed
\end{itemize}
\end{remark}}

 {\begin{remark}[Comparison with the other existing asynchronous variant]

\noindent
The preprint\;\cite{eckstein2018asynchronous} proposes another asynchronous variant of Progressive Hedging. This algorithm obviously shares common points with Algorithm\;\ref{alg:ARPH} but has fundamental differences.
The most striking one lies in the primal-dual update: at each iteration we update the primal-dual variable $\varglob^k$ only for the current scenario $s[i]$ while
the asynchronous Progressive Hedging of\;\cite{eckstein2018asynchronous} updates the full primal and dual variables.
This comes from the fact that our algorithm is based on the asynchronous coordinate-descent method for operators of\;\cite{peng2016arock} while
\cite{eckstein2018asynchronous} is based on the asynchronous splitting method of\;\cite{eckstein2017simplified}.
Another practical difference is that we only use a partial projection related to the drawn scenario.\qed
\end{remark}}

\section{The \tool toolbox}\label{sec:toolbox}

We release an open-source toolbox for modeling and solving multi-stage stochastic problems with the proposed Progressive Hedging variants. The toolbox is named \tool (for Randomized Progressive Hedging) and is implemented on top of \texttt{JuMP} \cite{DunningHuchetteLubin2017} the popular framework for mathematical optimization, embedded in \texttt{Julia} language \cite{Julia-2017}. The source code, online documentation, and an interactive demonstration are available on the GitHub page of the project:
\begin{center}
    \url{https://github.com/yassine-laguel/RandomizedProgressiveHedging.jl}.
\end{center}

 {The toolbox \tool seems to implement the first publicly-available and theoretically-grounded variant of progressive hedging with randomized/asynchronous calls to the scenario subproblems. Related implementations include the theoretically-grounded one of \cite{eckstein2018asynchronous}
and the asynchronous heuristic of \texttt{ProgressiveHedgingSolvers.jl}
publicly-available via the modelling framework \texttt{StochasticPrograms.jl} \cite{spjl}.}

In this section, we only describe the basic usage of \tool; for more details, we refer to the Appendix\;\ref{sec:rph_details} and the online material. Notably, we defer the problem modeling (for direct testing, we provide functions directly building toy problems such as \texttt{build\_simpleexample}). Once a problem is instantiated, its resolution can be launched directly with the sequential Progressive Hedging or Randomized Progressive Hedging methods (Algorithms\;\ref{alg:PH} and\;\ref{alg:RPH}):
\begin{lstlisting}
using RPH

pb = build_simpleexample() ## Generation of a toy problem

y_PH = solve_progressiveHedging(pb) ## Solving with Progressive Hedging
println("Progressive Hedging output is: ", y_PH)

y_RPH = solve_randomized_sync(pb)  ## Solving with Randomized Progressive Hedging
println("Randomized Progressive Hedging output is:  ", y_RPH)
\end{lstlisting}

In a parallel environment, one can use the \texttt{Distributed} module of \texttt{Julia}\footnote{A detailed explanation of \texttt{Julia}'s parallelism is available at \texttt{Julia} documentation: \url{https://docs.Julialang.org/en/v1/manual/parallel-computing/}. By default, the created workers are on the same machine but can easily be put on a distant machine through an SSH channel.} to setup some number of workers.  This can be done either at launch time (e.g. as \texttt{Julia -p 8}) or within the \texttt{Julia} process (with commands \texttt{addprocs()}, \texttt{rmprocs()} and \texttt{procs()}). Once this is set, the proposed parallel algorithms (Algorithm\;\ref{alg:RPHPar} and Algorithm\;\ref{alg:ARPH}) automatically use all available workers:

\begin{lstlisting}
using Distributed
addprocs(7); length(procs()) # Gives one master + 7 workers

y_par = solve_randomized_par(pb)   ## Solving with Parallel Randomized Progressive Hedging
println("Parallel Randomized Progressive Hedging output is: ", y_par)

y_async = solve_randomized_async(pb)  ## Solving with Asynchronous Randomized Progressive Hedging
println("Asynchronous solve output is: ", y_async)
\end{lstlisting}
The provided methods in \tool rely on  {three} criteria for stopping:
\begin{itemize}
    \item maximal computing time (default: one hour),
    \item maximal number of scenario subproblems solved (default:  {$10^6$}),
    \item  {residual norm (norm of differences of iterates) %
     inferior to the mixed absolute/relative threshold $\varepsilon_{abs}+\varepsilon_{rel}\|z_k\|$, where $z_k$ is the current iterate of the algorithm (default: $\varepsilon_{abs}=10^{-8}$, $\varepsilon_{rel}=10^{-4}$).}
\end{itemize}

\section{Numerical illustrations}\label{sec:num}

This section presents numerical results obtained with our toolbox \tool on multistage stochastic problems. We illustrate the behavior of our methods on a small hydro-thermal scheduling problem. A complete experimental study on real-life problems or modern parallel computing systems is beyond the scope of this work. We release our toolbox to allow reproducibility of our results and to spark further research on these randomized method {s}.

\subsection{A multistage stochastic problem inspired from energy optimization}\label{sec:toy}

We consider a simple convex problem modeling a problem of hydro thermal scheduling; it follows\;\cite{pereira1991multi} and the FAST toolbox\footnote{\url{https://stanford.edu/~lcambier/cgi-bin/fast/index.php}}.

Assume that an energy company wishes to deliver electricity to its clients either produced by several dams or bought externally. The dams produce cheaper energy but can only store a limited amount of water. The randomness of the problem comes from the amount of rain filling the dams at each stage. Mathematically, at each stage $t\in\{1,\ldots,T\}$, each dam $b\in\{1,\ldots,B\}$ has a quantity $q_t^b\in\mathbb{R}_+$ of water. For each stage $t$, the company has to decide: i) for each dam $b$ the quantity of water to convert to electricity $y_t^b\in\mathbb{R}_+$; and ii) the quantity of electricity to buy externally $e_t\in\mathbb{R}_+$. The decision variable at stage $t$ thus writes $\var_t = (q_t,y_t,e_t)\in\mathbb{R}^B_+\times\mathbb{R}^B_+\times\mathbb{R}_+$.

At stage $t$, the random variable $\xi^t$ represents the amount of water that arrived at each of the dams since stage $t-1$. Out of simplicity, $\xi_t$ is equal to $r_{dry}$  with probability $p_{dry}$ or $r_{wet}$ with probability $1-p_{dry}$. This defines a binary scenario tree (see Fig.~\ref{fig:proba_tree}) leading to  {$2^{T-1}$} scenarios.

For a scenario $s$, i.e. a realization of the sequence of water arrivals $(\xi_1^s,\ldots,\xi_{T-1}^s)$, the objective function writes as  {the sum $f^s = \tilde{f}^s + \iota_{\mathcal{C}^s}$ of the energy generation cost
\[
    \tilde{f}^s(\var) = \sum_{t=1}^T c_{H, t}^\top y_t + c_E  e_t
\]
and the indicator function of constraints
\begin{align*}
\mathcal{C}^s =   \left\{
    \begin{array}{lll}
        \sum_{b=1}^B y_t^b + e_t \geq D &  \text{ for all } t & {\scriptstyle \text{(demand is met at each stage)}}   \\
        q_t^b = q_{t-1}^b  - y_{t}^b + \xi_t^ {s} &\text{ for all } t \geq 2,b & {\scriptstyle \text{(evolution of the amount of water)}}   \\
        q_1^b  = W_1^b - y_1^b & \text{ for all } b & {\scriptstyle \text{(init. amount of water per dam)}}   \\
                q_t^b \leq W^b & \text{ for all } t,b & {\scriptstyle \text{(max. amount of water per dam)}}
    \end{array}
    \right. .
\end{align*}}
For a given scenario, minimizing this objective function amounts to solving a quadratic optimization problem. The variables and constants are summarized in Table~\ref{tab:hydro}.

\begin{table}[ht!]
    \centering

        \begin{tabular}{|p{0.02\textwidth}|p{0.03\textwidth}|p{0.05\textwidth}|p{0.75\textwidth}|}
    \hline
    \parbox[t]{2mm}{\multirow{3}{*}{\rotatebox[origin=c]{90}{M.S.P.}}} & $T$ & $ \mathbb{N}$ &  number of stages \\
    &$S$ & $\mathbb{N}$ & number of scenarios\\
    & $\xi$ & $\mathbb{R}_+^{T-1}$ & amount of water brought by the rain since the previous stage\\
          \hline
    \end{tabular}

        \begin{tabular}{|p{0.02\textwidth}|p{0.03\textwidth}|p{0.05\textwidth}|p{0.75\textwidth}|}
    \hline
    \parbox[t]{2mm}{\multirow{6}{*}{\rotatebox[origin=c]{90}{Constants}}} & $B$ & $ \mathbb{N}$ &  number of dams \\
    &$c_{H}$ & $\mathbb{R}^{B T}_+$ & vector of electricity production costs at the dams\\
    & $c_E$ & $\mathbb{R}_+$ & cost of buying external electricity\\
     &   $D$ & $ \mathbb{R}_+$ & electricity demand to satisfy at each stage \\
&    $W$ & $ \mathbb{R}_+^B$ & maximal amount of water at the dams \\
  &  $W_1$ & $ \mathbb{R}_+^B$ & initial amount of water available at the dams\\
          \hline
    \end{tabular}

        \begin{tabular}{|p{0.02\textwidth}|p{0.03\textwidth}|p{0.05\textwidth}|p{0.75\textwidth}|}
    \hline
    \parbox[t]{2mm}{\multirow{4}{*}{\rotatebox[origin=c]{90}{Variables}}} & $q$ & $ \mathbb{R}_+^{B T}$ &  quantity of water at the dams for each stage (directly depends on $y$ and $\xi$) \\
    &$y$ & $ \mathbb{R}_+^{B T}$ & amount of water to transform into electricity at the dams at each stage \\
    & $e$ & $\mathbb{R}_+^{T}$ & amount of electricity to purchase at each stage \\
    & $\var$ & $ \mathbb{R}_+^{n}$ & $\var = (q,y,e)$ and $n=(2B+1)T$ \\
          \hline
    \end{tabular}

    \caption{Variables and constants for the hydroelectric problem}
    \label{tab:hydro}
\end{table}

For our computational illustration, we generate randomly one instance of this problem, with $B=20$ dams,  {$T=6$ stages making $S = 2^5 =  {32}$ scenarios}. The $ {32}$ quadratic subproblems (one associated to each scenario) are solved by the interior point solver of \texttt{Mosek}, with default parameters\footnote{In particular, the (primal) feasibility tolerance is $10^{-8}$, and therefore this is the target level of tolerance for the experiments.}. We also use the solver \texttt{Mosek} to compute the optimal solution with high precision in order to plot the suboptimality of the iterates generated along the run of the algorithms.  {Since the problem is not big, this computation is quick, of the order of a second.} %

\subsection{Numerical Results}\label{sec:res}

Our illustrative experiments compare %
the behavior of the different variants of Progressive Hedging implemented in \tool. We make two experiments to illustrate the interests of randomization and parallelization for Progressive Hedging: on a sequential/single-thread set-up and on a parallel setup.

We run our experiments on a laptop with an 8-core processor (Intel(R) Core(TM) i7-10510U CPU @ 1.80GHz). For the parallel computation, one core plays the role of master, and the seven others are workers. On each core, solving the small-size quadratic subproblems with an efficient software is rather fast (average $0.02s$).
This parallel computing system is thus simple, basic, and homogeneous. In order to reveal the special features of asynchronous algorithms in the experiments, we introduce a small artificial heterogeneity by adding a $0.1s$ waiting time to  {4} scenarios.  {For each experiment, we run each algorithm 10 times and report the median value. In order to display the variability of randomized methods, we also shade the area corresponding to the first and third quartiles for each algorithm.}

\paragraph{Sequential experiments}

{In Figure~\ref{fig:comp1w}}, we compare the Progressive Hedging  (Algorithm\;\ref{alg:PH}, see also remark~\ref{rem:leaf}) with the randomized variant (Algorithm\;\ref{alg:RPH} where we draw 20 scenarios per iteration) with  both uniform sampling and $p$-sampling (see Section~\ref{sec:rand}).  We thus display the decrease of two quantities:
 {\begin{itemize}
    \item the (unconstrained) suboptimality with respect to $\tilde{f} := \sum_{s=1}^S p^s\tilde{f}^s$
    \[
    (\tilde{f}(\tilde{x}^k)-\tilde{f}(x^\star)) / \tilde{f}(x^\star).
    \]
    \item the distance to feasibility, as the distance between $\tilde{\var}^s$ and $\mathcal{C}^s$ over all scenarios
    \[
    \max_{s\in \{1,\ldots,S\}} \|\varvar^{k,s} -\tilde{\var}^{k,s}\|;
    \]
    Note indeed that as for most splitting methods, iterates are asymptotically feasible; more precisely $(\tilde{\var}^k)$ is always in $\NA$ but the individual scenario constraints $\mathcal{C}^s$ are verified  only asymptotically.
\end{itemize}
For illustration purposes, we also provide  the number of subproblems solved along time, and the steplength of the iterates sequence (i.e. the difference between two successive iterates).}

Our first observation is that Progressive Hedging and randomized Progressive Hedging with uniform sampling perform similarly, with respect both to time and to number of subproblems solved (Figures \ref{fig:comp1w_sub} and \ref{fig:comp1w_iter} respectively).
We also notice that $p$-sampling variant gets to a lower suboptimality but with a larger feasibility gap (displayed on Figure\;\ref{fig:comp1w_feas}). This makes sense since scenarios that are prominent in the objective function are also the ones most often drawn and optimized. Conversely, more work needs to be invested on other scenarios to further reduce feasibility.
An interest of the randomized variants is that they produce much more iterates compared to the base algorithm which requires one pass over all scenarios. This can be useful in setups where one iteration of Progressive Hedging is time-consuming.

\begin{figure}[t!]
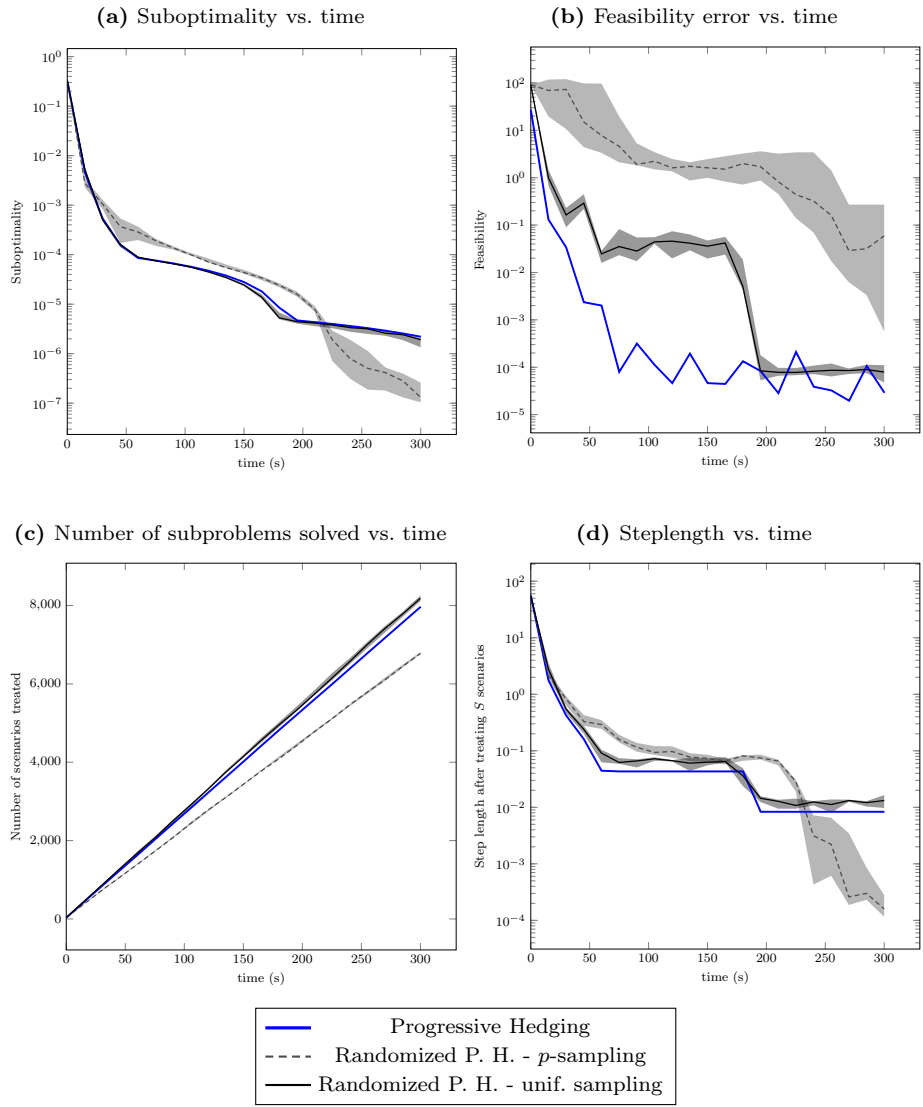

    \centering
    \begin{subfigure}[t]{0.5\textwidth}
        \centering
                \caption{Suboptimality  vs. time \label{fig:comp1w_sub}}
        \resizebox{\textwidth}{!}{
        \input{6stages20dams_seq_subopt.tex}
        }
    \end{subfigure}%
    \begin{subfigure}[t]{0.5\textwidth}
            \centering
                    \caption{Feasibility error  vs. time \label{fig:comp1w_feas}}
                \resizebox{\textwidth}{!}{
        \input{6stages20dams_seq_Cfeas.tex}
                }
    \end{subfigure}\\[0.6cm]
    \begin{subfigure}[t]{0.5\textwidth}
     \centering
       \caption{Number of subproblems solved vs. time \label{fig:comp1w_iter}}
                \resizebox{\textwidth}{!}{
        \input{6stages20dams_seq_nscentreated.tex}
                }
    \end{subfigure}%
    \begin{subfigure}[t]{0.5\textwidth}
  \centering
  \caption{Steplength  vs. time \label{fig:comp1w_steplength}}
        \resizebox{\textwidth}{!}{
        \input{6stages20dams_seq_residual.tex}
        }
    \end{subfigure}\\
       \begin{subfigure}[t]{0.49\textwidth}
        \centering
        \vspace{0.2cm}
        \begin{tikzpicture}
            \begin{customlegend}[
                legend entries={
                    Progressive Hedging,
                    Randomized P. H. - $p$-sampling,
                    Randomized P. H. - unif. sampling,
                }, legend style={font=\footnotesize}]
                \addlegendimage{no marks, solid, color={blue}, very thick}
                \addlegendimage{no marks, thick, densely dashed, color={black!70!white}}
                \addlegendimage{no marks, thick, solid, color={black}}
            \end{customlegend}
        \end{tikzpicture}
        \end{subfigure}
\caption{Comparison of standard vs.\;randomized Progressive Hedging in a sequential set-up.\label{fig:comp1w}}
\end{figure}

\paragraph{Parallel experiments}

After adding seven workers\footnote{In parallel setups, the respective performance of parallel and asynchronous methods is highly variable. We report the experiments obtained on a rather well behaved setup (all workers are equal), still they reflect the general trend we observed.}, we compare in  Figure\;\ref{fig:comp7w} the Parallel Randomized Progressive Hedging (Algorithm\;\ref{alg:RPHPar}) and the Asynchronous Progressive Hedging (Algorithm\;\ref{alg:ARPH}).

We see on Figure\;\ref{fig:comp7w_iter} that with 7 workers,  {Parallel} Randomized Progressive Hedging is able to treat  {about 1.5} times as many scenarios as the sequential methods (randomized or not). Furthermore, the asynchronous variant lifts the communication bottleneck and is able to treat  {4} times as many scenarios as the sequential.

We also see on Figure\;\ref{fig:comp7w_sub} that the Parallel Randomized method converts this higher scenario throughput into efficient iterates: the convergence is faster to the target precision $10^{-8}$ with a similar feasability gap (Figure\;\ref{fig:comp7w_feas}). Thus, this variant is a simple and efficient way to solve multistage problems on parallel setups.

A final remark from Figure\;\ref{fig:comp7w_sub} is that the theoretical stepsize of Theorem\;\ref{th:arph} (obtained by taking the maximum observed delay) is overly pessimistic, resulting in a slow algorithm. Taking a unit stepsize performs well for this instance. However, we observed in other setups that a unit stepsize may lead to non-convergence; in general, some tuning of this parameter is required for better performance.

\begin{figure}[t!]
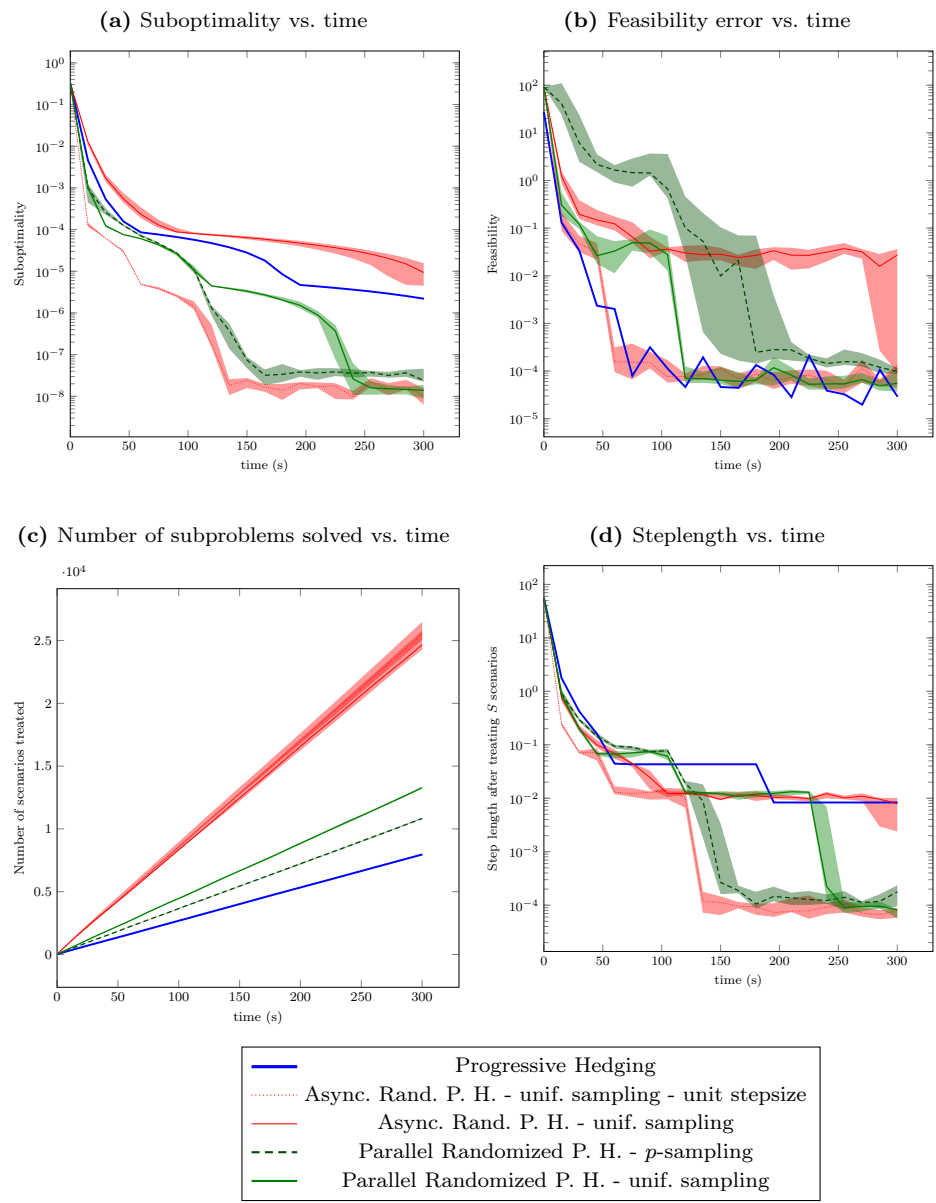

    \centering
    \begin{subfigure}[t]{0.5\textwidth}
        \centering
                    \caption{Suboptimality  vs. time \label{fig:comp7w_sub}}
        \resizebox{\textwidth}{!}{
        \input{6stages20dams_dist_subopt.tex}
        }
    \end{subfigure}%
    ~
    \begin{subfigure}[t]{0.5\textwidth}
            \centering
                  \caption{Feasibility error  vs. time \label{fig:comp7w_feas}}
                \resizebox{\textwidth}{!}{
        \input{6stages20dams_dist_Cfeas.tex}
                }
    \end{subfigure}\\[0.6cm]
    \begin{subfigure}[t]{0.5\textwidth}
         \centering
                 \caption{Number of subproblems solved vs. time \label{fig:comp7w_iter}}
                \resizebox{\textwidth}{!}{
        \input{6stages20dams_dist_nscentreated.tex}
                }
    \end{subfigure}%
    ~
    \begin{subfigure}[t]{0.5\textwidth}
\centering
 \caption{Steplength  vs. time \label{fig:comp7w_steplength}}
        \resizebox{\textwidth}{!}{
        \input{6stages20dams_dist_residual.tex}
        }
    \end{subfigure} \\
    \begin{subfigure}[t]{0.49\textwidth}
        \centering
        \vspace{0.2cm}
        \begin{tikzpicture}
            \begin{customlegend}[
                legend entries={
                    Progressive Hedging,
                    Async. Rand. P. H. - unif. sampling - unit stepsize,
                    Async. Rand. P. H. - unif. sampling,
                    Parallel Randomized P. H. - $p$-sampling,
                    Parallel Randomized P. H. - unif. sampling,
                }, legend style={font=\footnotesize}]
                \addlegendimage{no marks, solid, color={blue}, very thick}
                \addlegendimage{no marks, color={red}, densely dotted}
                \addlegendimage{no marks, solid, color={red}}
                \addlegendimage{no marks, thick, densely dashed, color={green!30!black}}
                \addlegendimage{no marks, thick, solid, color={green!50!black}}
            \end{customlegend}
        \end{tikzpicture}
    \end{subfigure}
\caption{Comparison of standard vs.\;randomized Progressive Hedging in a parallel set-up with $7$ workers.\label{fig:comp7w}}
\end{figure}

\section*{Acknowledgments}

The authors wish to thank the associate editor and the two anonymous reviewers for their valuable comments, notably with respect to the placement in the literature, which greatly improved the paper. 
F.I and J.M. thank Welington de Oliveira for fruitful discussions at the very beginning of this project.

\ifdraft
    \bibliographystyle{ormsv080}
\else
    \bibliographystyle{spmpsci}
\fi
\bibliography{optim}

\begin{thebibliography}{10}
\providecommand{\url}[1]{{#1}}
\providecommand{\urlprefix}{URL }
\expandafter\ifx\csname urlstyle\endcsname\relax
  \providecommand{\doi}[1]{DOI~\discretionary{}{}{}#1}\else
  \providecommand{\doi}{DOI~\discretionary{}{}{}\begingroup
  \urlstyle{rm}\Url}\fi

\bibitem{bauschke2011convex}
Bauschke, H.H., Combettes, P.L.: Convex analysis and monotone operator theory
  in Hilbert spaces.
\newblock Springer Science \& Business Media (2011)

\bibitem{Julia-2017}
Bezanson, J., Edelman, A., Karpinski, S., Shah, V.B.: Julia: A fresh approach
  to numerical computing.
\newblock SIAM {R}eview \textbf{59}(1), 65--98 (2017).
\newblock \doi{10.1137/141000671}

\bibitem{bianchi2015coordinate}
Bianchi, P., Hachem, W., Iutzeler, F.: A coordinate descent primal-dual
  algorithm and application to distributed asynchronous optimization.
\newblock IEEE Transactions on Automatic Control \textbf{61}(10), 2947--2957
  (2015)

\bibitem{spjl}
Biel, M., Johansson, M.: Efficient stochastic programming in {J}ulia.
\newblock arXiv preprint arXiv:1909.10451  (2019)

\bibitem{combettes2015stochastic}
Combettes, P.L., Pesquet, J.C.: Stochastic quasi-fej{\'e}r block-coordinate
  fixed point iterations with random sweeping.
\newblock SIAM Journal on Optimization \textbf{25}(2), 1221--1248 (2015)

\bibitem{de1993computational}
De~Silva, A., Abramson, D.: Computational experience with the parallel
  progressive hedging algorithm for stochastic linear programs.
\newblock In: Proceedings of 1993 Parallel Computing and Transputers Conference
  Brisbane, pp. 164--174 (1993)

\bibitem{DunningHuchetteLubin2017}
Dunning, I., Huchette, J., Lubin, M.: Jump: A modeling language for
  mathematical optimization.
\newblock SIAM Review \textbf{59}(2), 295--320 (2017).
\newblock \doi{10.1137/15M1020575}

\bibitem{eckstein2017simplified}
Eckstein, J.: A simplified form of block-iterative operator splitting and an
  asynchronous algorithm resembling the multi-block alternating direction
  method of multipliers.
\newblock Journal of Optimization Theory and Applications \textbf{173}(1),
  155--182 (2017)

\bibitem{eckstein1992douglas}
Eckstein, J., Bertsekas, D.P.: On the douglas—rachford splitting method and
  the proximal point algorithm for maximal monotone operators.
\newblock Mathematical Programming \textbf{55}(1-3), 293--318 (1992)

\bibitem{eckstein2018asynchronous}
Eckstein, J., Watson, J.P., Woodruff, D.L.: Asynchronous projective hedging for
  stochastic programming (2018)

\bibitem{iutzeler2013asynchronous}
Iutzeler, F., Bianchi, P., Ciblat, P., Hachem, W.: Asynchronous distributed
  optimization using a randomized alternating direction method of multipliers.
\newblock In: Decision and Control (CDC), 2013 IEEE 52nd Annual Conference on,
  pp. 3671--3676. IEEE (2013)

\bibitem{lions1979splitting}
Lions, P.L., Mercier, B.: Splitting algorithms for the sum of two nonlinear
  operators.
\newblock SIAM Journal on Numerical Analysis \textbf{16}(6), 964--979 (1979)

\bibitem{peng2016arock}
Peng, Z., Xu, Y., Yan, M., Yin, W.: Arock: an algorithmic framework for
  asynchronous parallel coordinate updates.
\newblock SIAM Journal on Scientific Computing \textbf{38}(5), A2851--A2879
  (2016)

\bibitem{pereira1991multi}
Pereira, M.V., Pinto, L.M.: Multi-stage stochastic optimization applied to
  energy planning.
\newblock Mathematical programming \textbf{52}(1-3), 359--375 (1991)

\bibitem{rockafellar2018solving}
Rockafellar, R.T.: Solving stochastic programming problems with risk measures
  by progressive hedging.
\newblock Set-Valued and Variational Analysis \textbf{26}(4), 759--768 (2018)

\bibitem{rockafellar2018superquantile}
Rockafellar, R.T., Royset, J.O.: Superquantile/cvar risk measures: Second-order
  theory.
\newblock Annals of Operations Research \textbf{262}(1), 3--28 (2018)

\bibitem{rockafellar1991scenarios}
Rockafellar, R.T., Wets, R.J.B.: Scenarios and policy aggregation in
  optimization under uncertainty.
\newblock Mathematics of operations research \textbf{16}(1), 119--147 (1991)

\bibitem{ruszczynski1997decomposition}
Ruszczy{\'n}ski, A.: Decomposition methods in stochastic programming.
\newblock Mathematical programming \textbf{79}(1-3), 333--353 (1997)

\bibitem{ruszczynski2003stochastic}
Ruszczy{\'n}ski, A., Shapiro, A.: Stochastic programming models.
\newblock Handbooks in operations research and management science \textbf{10},
  1--64 (2003)

\bibitem{ryan2013toward}
Ryan, S.M., Wets, R.J.B., Woodruff, D.L., Silva-Monroy, C., Watson, J.P.:
  Toward scalable, parallel progressive hedging for stochastic unit commitment.
\newblock In: 2013 IEEE Power \& Energy Society General Meeting, pp. 1--5. IEEE
  (2013)

\bibitem{shapiro2009lectures}
Shapiro, A., Dentcheva, D., Ruszczy{\'n}ski, A.: Lectures on stochastic
  programming: modeling and theory.
\newblock SIAM (2009)

\bibitem{somervell1998progressive}
Somervell, M.: Progressive hedging in parallel.
\newblock Ph.D. thesis, Citeseer

\bibitem{wachter2006implementation}
W{\"a}chter, A., Biegler, L.T.: On the implementation of an interior-point
  filter line-search algorithm for large-scale nonlinear programming.
\newblock Mathematical programming \textbf{106}(1), 25--57 (2006)

\bibitem{watson2011progressive}
Watson, J.P., Woodruff, D.L.: Progressive hedging innovations for a class of
  stochastic mixed-integer resource allocation problems.
\newblock Computational Management Science \textbf{8}(4), 355--370 (2011)

\end{thebibliography}

\appendix

\section{Fixed-point view  {of} Progressive Hedging}\label{sec:OPPH}

This appendix complements Section\;\ref{sec:standard}: we reformulate the multistage problem \eqref{eq:pb} as finding a fixed point of some operator (see the textbook\;\cite[Chap.~3.9]{ruszczynski2003stochastic}). For all definitions and results on monotone operator theory, we refer to \cite{bauschke2011convex}.

Denoting the objective function by $f(\var):=\sum_{s=1}^S p_s f^s(\var)$ and the indicator of constraints by $\iota_{\NA}$ with $\iota_{\NA}(\var) = 0$ if $\var\in\NA$ and $+\infty$ otherwise, we have that solving \eqref{eq:pb} amounts to finding $\var^\star$ such that
\[
    0 \in \partial (f + \iota_{\NA})(\var^\star) = \partial f(\var^\star) + \partial \iota_{\NA}(\var^\star)
\]
where we use\;Assumption\;\ref{hyp:gen2} for the equality. Then, we introduce the two following operators
\begin{align}\label{eq:op}
    \A(\var) :=  P^{-1} \partial f(\var) ~~\text{ and }~~  \B(\var) := P^{-1} \partial  \iota_{\NA}(\var)
\end{align}
where $P = \mathrm{diag}( p_1,\ldots,p_S )$. Using Assumption~\ref{hyp:gen}, the operators $\A$ and $\B$ defined in \eqref{eq:op} are maximal monotone since so are the subdifferentials of convex proper lower-semicontinuous functions.

Solving \eqref{eq:pb} thus amounts to finding a zero of $\A+\B$ the sum of two maximal monotone operators:
\begin{equation}\label{eq:zero}
\text{$\var^\star$ solves \eqref{eq:pb}} \iff  \text{$\var^\star$ is a zero of $\A + \B$ i.e.}~ 0\in\A(\var^\star) + \B(\var^\star).
\end{equation}

We follow the notation of \cite[Chap.\;3]{ruszczynski2003stochastic} and the properties of \cite[Chap.~4.1 and 23.1]{bauschke2011convex}. For a given maximal monotone operator $\M$, we define for any $\mu>0$ two associated operators:
\begin{itemize}
        \item[i)] the resolvent $\mathsf{J}_{\mu\M} = (I+\mu \M)^{-1} $, (which is well-defined and firmly non-expansive),
        \item[ii)] the reflected resolvent $\mathsf{O}_{\mu\M} = 2\mathsf{J}_{\mu\M}-I$ (which is non-expansive).
\end{itemize}

These operators allow us to formulate our multistage problem as a fixed-point problem: with the help of \eqref{eq:zero} and \cite[Prop.\;25.1(ii)]{bauschke2011convex}, we have
\[
\text{$\var^\star$ solves \eqref{eq:pb}} \iff  \text{$\var^\star = \mathsf{J}_{\mu\B}(\varglob^\star)$  with $\varglob^\star$ a fixed point of $\mathsf{O}_{\mu\A}\circ\mathsf{O}_{\mu\B}$, i.e. $\varglob^\star=\mathsf{O}_{\mu\A}\circ\mathsf{O}_{\mu\B}(\varglob^\star)$.}
\]

We can apply now a fixed-point algorithm to the firmly non-expansive operator\footnote{As $\mathsf{O}_{\mu\A}$  and $\mathsf{O}_{\mu\B}$ are non-expansive but not firmly non-expansive, it is necessary to average them with the current iterate (this is often called the Krasnosel'ski\u{\i}--Mann algorithm \cite[Chap 5.2]{bauschke2011convex}) to make this iteration firmly non-expansive and ensure Fejér monotone convergence.} $\frac{1}{2}\mathsf{O}_{\mu\A} \circ \mathsf{O}_{\mu\B} + \frac{1}{2}\Id $ to find a fixed point of $\mathsf{O}_{\mu\A} \circ \mathsf{O}_{\mu\B}$.

This gives the following iteration (equivalent to Douglas-Rachford splitting)
\begin{align}
    \label{eq:DR}
    \varglob^{k+1} = \frac{1}{2}\mathsf{O}_{\mu\A}(\mathsf{O}_{\mu\B}(\varglob^k)) + \frac{1}{2}\varglob^k
\end{align}
which converges to a point $\varglob^\star$ such that $x^\star := \mathsf{J}_{\mu\B}(\varglob^\star) $ is a zero of $\A+\B$;
see \cite[Chap.~25.2]{bauschke2011convex}.

It is well-known (see e.g.\;the textbook\;\cite[Chap.~3,~Fig.~10]{ruszczynski2003stochastic}) that this algorithm with the operators $\A$ and $\B$ defined in \eqref{eq:op} leads to the Progressive Hedging algorithm. We give here a short proof of this property; along the way, we introduce basic properties and arguments used in the new developments on randomized Progressive Hedging of the next two appendices. We provide first the expressions of the reflected resolvent operators for $\A$ and $\B$.

\begin{lemma}[Operators associated with Progressive Hedging]\label{lem:AB}
Let endow the space $\E$ of $S\times n$ real matrices with the weighted inner product  $\langle A,B\rangle_P = \mathrm{Trace}(A^\mathrm{T} P B)$. Then the operators $\A$ and $\B$ defined in \eqref{eq:op} are maximal monotone, and their reflected resolvent operators $\mathsf{O}_{\mu\A}$ and $\mathsf{O}_{\mu\B}$  have the following expressions:
\begin{itemize}
    \item[i)] $\mathsf{O}_{\mu\A}(\varglob)  = \var - \mu \vartwo $ with
      \begin{align*}
         \var^s = \argmin_{y\in\RR^n}\left\{ f^s(y) + \frac{1}{2\mu} \left\|y-\varglob^s \right\|^2  \right\} \text{ for all } s=1,\ldots,S
      \end{align*}
      and $\vartwo =(\varglob-\var)/\mu $ (hence $\mathsf{O}_{\mu\A}(\varglob) = 2\var - \varglob$);
      \smallskip
    \item[ii)] $\mathsf{O}_{\mu\B}(\varglob)  = \var - \mu \vartwo $ with
    \begin{align*}
       \var_t^s = \frac{1}{\sum_{\sigma\in\mathcal{B}^s_t} p_\sigma } \sum_{\sigma\in\mathcal{B}^s_t} p_\sigma \varglob_t^\sigma  \text{ for all } s=1,\ldots,S \text{ and } t=1,\ldots,T
    \end{align*}
    and $\vartwo =(\varglob-\var)/\mu $ (hence $\mathsf{O}_{\mu\B}(\varglob) = 2\var - \varglob$). The point $\var$ is the \emph{orthogonal projection} of $\varglob$ to $\NA$. Thus, $\varglob$ writes uniquely as $\varglob = \var + \mu \vartwo $ with $\var\in\NA$ and $\vartwo\in\NA^\perp$.
\end{itemize}
\end{lemma}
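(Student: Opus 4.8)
The plan is to recognize both $\A$ and $\B$ as subdifferentials taken with respect to the weighted inner product $\langle\cdot,\cdot\rangle_P$, and then to read off their resolvents explicitly. First I would observe that for any convex proper lsc $g$ on $\E$, the set $P^{-1}\partial g(\var)$ is exactly the subdifferential of $g$ in the geometry $\langle\cdot,\cdot\rangle_P$: starting from the standard subgradient inequality $g(\varvar)\ge g(\var)+\langle v,\varvar-\var\rangle$ and using that $\langle v,w\rangle=\mathrm{Trace}(v^{\mathrm T}w)=\mathrm{Trace}((P^{-1}v)^{\mathrm T}Pw)=\langle P^{-1}v,w\rangle_P$, one gets $g(\varvar)\ge g(\var)+\langle P^{-1}v,\varvar-\var\rangle_P$. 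Applying this to $g=f$ and $g=\iota_\NA$ gives $\A=\partial_P f$ and $\B=\partial_P\iota_\NA$, which are maximal monotone on the Hilbert space $(\E,\langle\cdot,\cdot\rangle_P)$ because subdifferentials of convex proper lsc functions are maximal monotone in any Hilbert inner product (see \cite{bauschke2011convex}). This settles the monotonicity and maximality claims.

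For part (i), I would compute $\mathsf{J}_{\mu\A}$ block by block. Since $f(\var)=\sum_s p_s f^s(\var^s)$, the $s$-th block of $\partial f$ is $p_s\,\partial f^s(\var^s)$, so the $s$-th block of $\A=P^{-1}\partial f$ is simply $\partial f^s(\var^s)$; in particular $\A$ is separable across scenarios. Hence $\var=\mathsf{J}_{\mu\A}(\varglob)$ is characterized by $\varglob^s-\var^s\in\mu\,\partial f^s(\var^s)$ for each $s$, i.e. $0\in\partial f^s(\var^s)+\tfrac1\mu(\var^s-\varglob^s)$, which is precisely the optimality condition of the stated proximal subproblem. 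Setting $\vartwo=(\varglob-\var)/\mu$ then gives $\mathsf{O}_{\mu\A}(\varglob)=2\var-\varglob=\var-\mu\vartwo$.

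For part (ii), I would argue that $\mathsf{J}_{\mu\B}$ is the $\langle\cdot,\cdot\rangle_P$-orthogonal projection onto $\NA$. As $\iota_\NA$ is the indicator of a subspace, its $P$-subdifferential at a point of $\NA$ is the $P$-normal cone, i.e. the orthogonal complement $\NA^\perp$ for $\langle\cdot,\cdot\rangle_P$, so $\var=\mathsf{J}_{\mu\B}(\varglob)$ is equivalent to $\var\in\NA$ and $\varglob-\var\in\NA^\perp$ — exactly the $P$-orthogonal decomposition of $\varglob$. To obtain the explicit formula I would exploit that this projection decouples over stages and bundles: minimizing $\sum_{\sigma\in\mathcal{B}^s_t}p_\sigma\|c-\varglob_t^\sigma\|^2$ over a single common value $c\in\RR^{n_t}$ yields the $p$-weighted average $c=\big(\sum_{\sigma\in\mathcal{B}^s_t}p_\sigma\big)^{-1}\sum_{\sigma\in\mathcal{B}^s_t}p_\sigma\varglob_t^\sigma$, which is the announced expression. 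A short verification confirms that the residual $\varglob-\var$ is $P$-orthogonal to $\NA$ (each bundle contributes $\langle\sum_{\sigma\in\mathcal{B}^s_t}p_\sigma(\varglob_t^\sigma-\var_t^\sigma),c\rangle=0$), so $\vartwo=(\varglob-\var)/\mu\in\NA^\perp$ and $\mathsf{O}_{\mu\B}(\varglob)=2\var-\varglob=\var-\mu\vartwo$, with uniqueness of the splitting following from $\NA\oplus\NA^\perp=\E$.

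The two optimality-condition computations are routine; the one point that genuinely needs care is the bookkeeping of the weight matrix $P$. Specifically, the key is to read $P^{-1}\partial g$ as a subdifferential in the $P$-geometry, so that the resolvent of $\B$ is the probability-weighted ($P$-orthogonal) projection rather than the Euclidean one — it is exactly this weighting that produces the $p_\sigma$-weighted averages over bundles and makes the orthogonality $\vartwo\in\NA^\perp$ hold in the correct inner product.
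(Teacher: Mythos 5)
Your proof is correct and takes essentially the same route as the paper's: both interpret $\A=P^{-1}\partial f$ and $\B=P^{-1}\partial\iota_{\NA}$ as subdifferentials with respect to the weighted inner product $\langle\cdot,\cdot\rangle_P$ to obtain maximal monotonicity, and then read the reflected resolvents off the unique decomposition $\varglob=\var+\mu\vartwo$ with $\vartwo\in\M(\var)$, identifying $\mathsf{J}_{\mu\A}$ with the scenario-wise proximity operators of the $f^s$ and $\mathsf{J}_{\mu\B}$ with the $P$-orthogonal projection onto $\NA$. The only difference is one of detail: you verify by hand (the trace identity for the $P$-geometry, blockwise optimality conditions, the bundle-wise weighted averages, and the $P$-orthogonality of the residual) what the paper delegates to citations of \cite[Prop.~23.1 and Prop.~16.34]{bauschke2011convex}, and your closing remark that the weighting by $P$ is exactly what turns the projection into the probability-weighted bundle averages is the right point to emphasize.
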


\begin{proof}
Since $\partial f(\cdot)$ and $\partial  \iota_{\NA}(\cdot)$ are the subdifferentials of convex proper lower-semicontinuous functions, they are maximal monotone with respect to the usual inner product, and there so are $\A$ and $\B$, with respect to the weighted inner product.

Applying \cite[Prop.\;23.1]{bauschke2011convex} to a maximal monotone operator $\M$, we get that $\varglob\in\E$ can be uniquely represented as $\varglob = \var + \mu \vartwo$ with  $\vartwo\in\M(\var)$, thus $\mathsf{J}_{\mu\M}(\varglob) = \var$ and $\mathsf{O}_{\mu\M}(\varglob) = \mathsf{O}_{\mu\M}(\var + \mu \vartwo) = \var - \mu \vartwo$. This gives the expressions for $\mathsf{O}_{\mu\A}$ and $\mathsf{O}_{\mu\B}$ from the expressions of $\mathsf{J}_{\mu\A}$ and  $\mathsf{J}_{\mu\B}$ based on the proximity operators associated with $f$ and $\iota_{\NA}$ (see \cite[Prop.\;16.34]{bauschke2011convex}).
\end{proof}

We now apply the general Douglas-Rachford scheme \eqref{eq:DR} with the expressions obtained in Lemma~\ref{lem:AB}. We first get:
\begin{align*}
    \left\{
        \begin{array}{ll}
            \var_t^{k,s} = \frac{1}{\sum_{\sigma\in\mathcal{B}^s_t} p_\sigma } \sum_{\sigma\in\mathcal{B}^s_t} p_\sigma \varglob_t^{k,\sigma}  \text{ for all } s=1,\ldots,S \text{ and } t=1,\ldots,T  & {\scriptstyle \var^k\in\NA}\\[2ex]
            \varvarglob^k = \mathsf{O}_{\mu\B}(\varglob^k) = 2\var^k - \varglob^k = \var^k - \mu \vartwo^k  & {\scriptstyle  \text{ with } \vartwo^k = (\varglob^k-\var^k)/\mu \in \NA^\perp  } \\
            & {\scriptstyle  \text{ thus } \vartwo^k = \vartwo^{k-1} + \frac{1}{\mu} (\varvar^k-\var^k)  } \\
            \varvar^{k+1,s} = \argmin_{y\in\RR^n}\left\{ f^s(y) + \frac{1}{2\mu} \left\|y-\varvarglob^{k,s} \right\|^2  \right\} \text{ for all } s=1,\ldots,S & \\
            \varglob^{k+1} = \frac{1}{2} (2\varvar^{k+1}-\varvarglob^k) + \frac{1}{2}\varglob^k = \varglob^k + \varvar^{k+1} - \var^{k+1} = \varvar^{k+1} + \mu \vartwo^k &
        \end{array}
    \right.
\end{align*}
Let us reorganize the equations and eliminate intermediate variables. In particular, we use the fact that, provided that the algorithm is initialized with $\var^0 \in \NA$ and $\vartwo^0\in\NA^\perp$, all iterates $(\var^k)$ and $(\vartwo^k)$ are in $\NA$ and $\NA^\perp$ respectively. We eventually obtain:
\begin{align*}
    \left\{
        \begin{array}{ll}
            \varvar^{k+1,s} = \argmin_{y\in\RR^n}\left\{ f^s(y) + \frac{1}{2\mu} \left\|y-\var^{k,s} + \mu \vartwo^{k,s} \right\|^2  \right\} \text{ for all } s=1,\ldots,S & \\[2ex]
            \var_t^{k+1,s} = \frac{1}{\sum_{\sigma\in\mathcal{B}^s_t} p_\sigma } \sum_{\sigma\in\mathcal{B}^s_t} p_\sigma \varvar_t^{k+1,\sigma}  \text{ for all } s=1,\ldots,S \text{ and } t=1,\ldots,T & \\ \hfill {\scriptstyle \var^k\in\NA \text{ converges to a solution of \eqref{eq:pb}}} & \\
            \vartwo^{k+1} = \vartwo^{k} + \frac{1}{\mu} (\varvar^{k+1}-\var^{k+1}) &
        \end{array}
    \right.
\end{align*}
This is exactly the Progressive Hedging algorithm, written with similar notation as in the textbook\;\cite[Chap.~3,~Fig.~10]{ruszczynski2003stochastic}. The convergence of the algorithm (recalled in Theorem\;\ref{th:ph})) can be obtained directly by instantiating the general convergence result of the Douglas-Rachford method
\cite[Chap.~25.2]{bauschke2011convex}.

In the next two appendices, we are going to follow the same line that has brought us from Douglas-Rachford to Progressive Hedging, to go from \emph{randomized} Douglas-Rachford to \emph{randomized} Progressive Hedging, and from \emph{asynchronous} Douglas-Rachford to \emph{asynchronous} Progressive Hedging.

\section{Derivation and Proof of the Randomized Progressive Hedging}\label{apx:rph}

A randomized counterpart of the Douglas-Rachford method\;\eqref{eq:DR} consists in updating only part of the variable chosen at random; see \cite{iutzeler2013asynchronous} and extensions \cite{bianchi2015coordinate,combettes2015stochastic}. At each iteration, this variant amounts to update the variables corresponding to the chosen scenario $\selected$ (randomly chosen with probability $q_\selected$), the other staying unchanged:
\begin{align}
\nonumber
    &\text{Draw a scenario } \selected \in \{1,\ldots,S\} \text{ with probability } \mathbb{P}[\selected = s] = q_s  \\
    & \left| \begin{array}{l}
        \varglob^{k+1,\selected} = \frac{1}{2}\left[\mathsf{O}_{\mu\A}(\mathsf{O}_{\mu\B}(\varglob^k))\right]^{\selected} + \frac{1}{2}\varglob^{k,\selected} \\
    \varglob^{k+1,s} =  \varglob^{k,s} \text{ for all } s\neq \selected
    \end{array}
     \right.
         \label{eq:RDR}
\end{align}

 {
Our goal is to obtain the Randomized Progressive Hedging (Algorithm\;\ref{alg:RPH}) as an instantiation of \eqref{eq:RDR} with the operators defined in Lemma\;\ref{lem:AB} in Appendix~\ref{sec:OPPH}. Before proceeding with the derivation, let us prove the convergence of \eqref{eq:RDR} with these operators. %

\begin{proposition}
\label{th:RDR}
Consider a multistage problem \eqref{eq:pb} verifying Assumptions\;\ref{hyp:gen}\;and\;\ref{hyp:gen2}.  Then, the sequence $(\varglob^k)$ generated by \eqref{eq:RDR} with  $\mathsf{O}_{\mu\A}$ and $\mathsf{O}_{\mu\B}$ defined in Lemma~\ref{lem:AB} converges almost surely to a fixed point of  $\mathsf{O}_{\mu\A}\circ\mathsf{O}_{\mu\B}$. Furthermore, $\tilde{\var}^k := \mathsf{J}_{\mu\B}(\varglob^k)$ converges to a solution of~\eqref{eq:pb}.
\end{proposition}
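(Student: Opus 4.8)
The plan is to recognize iteration \eqref{eq:RDR} as a \emph{randomized block-coordinate fixed-point iteration} of the Krasnosel'ski\u{\i}--Mann operator associated with Douglas--Rachford, and then to invoke the convergence theory of \cite{iutzeler2013asynchronous} together with its extensions \cite{bianchi2015coordinate,combettes2015stochastic}. First I would set $T := \tfrac12\,\mathsf{O}_{\mu\A}\circ\mathsf{O}_{\mu\B} + \tfrac12\,\Id$ on the Hilbert space $(\E,\langle\cdot,\cdot\rangle_P)$. By Lemma~\ref{lem:AB}, $\mathsf{O}_{\mu\A}$ and $\mathsf{O}_{\mu\B}$ are reflected resolvents of maximal monotone operators, hence non-expansive for $\langle\cdot,\cdot\rangle_P$; their composition is non-expansive, so the $\tfrac12$-average $T$ of a non-expansive map with the identity is firmly non-expansive ($\tfrac12$-averaged); see \cite{bauschke2011convex}. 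The fixed-point analysis of Appendix~\ref{sec:OPPH}, together with Assumption~\ref{hyp:gen} which guarantees existence of a solution of \eqref{eq:pb}, shows that $\mathrm{Fix}(T)=\mathrm{Fix}(\mathsf{O}_{\mu\A}\circ\mathsf{O}_{\mu\B})$ is nonempty and that $\mathsf{J}_{\mu\B}$ maps this set onto the solution set of \eqref{eq:pb}.

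The key structural observation is that the weighted inner product splits orthogonally along scenarios: since $P=\diag(p_1,\ldots,p_S)$, one has $\langle A,B\rangle_P=\sum_{s=1}^S p_s\langle A^s,B^s\rangle$, so $\E$ decomposes orthogonally into $S$ blocks of size $n$, one per scenario. Under this identification, \eqref{eq:RDR} is exactly the randomized coordinate-descent iteration on $T$: at step $k$ a single block $\selected$ is drawn with probability $q_\selected$ and overwritten by $(T\varglob^k)^{\selected}=\tfrac12[\mathsf{O}_{\mu\A}(\mathsf{O}_{\mu\B}(\varglob^k))]^{\selected}+\tfrac12\varglob^{k,\selected}$, while the other blocks are left unchanged. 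I would stress here that $T$ is \emph{not} separable across these blocks, because $\mathsf{O}_{\mu\B}$ couples the scenarios through the projection onto $\NA$; hence the full map $T\varglob^k$ must be evaluated and only its $\selected$-th block retained. This coupling is precisely why one needs the framework of \cite{iutzeler2013asynchronous}, which applies to arbitrary firmly non-expansive operators rather than only to separable ones.

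Next I would verify the two hypotheses of that framework: $T$ is firmly non-expansive (established above), and each block is selected with positive probability, i.e. $q_{\min}:=\min_s q_s>0$. The latter holds for both sampling rules proposed after Algorithm~\ref{alg:RPH} (uniform $q_s=1/S$, or $p$-sampling $q_s=p_s>0$ by Assumption~\ref{hyp:gen}). The almost-sure convergence result of \cite{iutzeler2013asynchronous} (see also \cite{combettes2015stochastic}) then yields that $(\varglob^k)$ converges almost surely to an $\E$-valued random variable $\varglob^\infty$ supported on $\mathrm{Fix}(T)=\mathrm{Fix}(\mathsf{O}_{\mu\A}\circ\mathsf{O}_{\mu\B})$. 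Finally, since $\mathsf{J}_{\mu\B}$ is firmly non-expansive, hence $1$-Lipschitz and continuous, $\tilde{\var}^k=\mathsf{J}_{\mu\B}(\varglob^k)$ converges almost surely to $\mathsf{J}_{\mu\B}(\varglob^\infty)$, which solves \eqref{eq:pb} by the characterization of Appendix~\ref{sec:OPPH}.

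The hard part will be the non-separability highlighted above: one cannot argue blockwise as for a product of independent coordinate maps, and must instead rely on the stochastic Fej\'er / supermartingale machinery of \cite{iutzeler2013asynchronous,combettes2015stochastic}, valid for coupled firmly non-expansive operators. A secondary technical point is to conduct all the estimates in the weighted geometry $\langle\cdot,\cdot\rangle_P$, so that the scenario blocks are genuinely orthogonal and the averagedness constant of $T$ is unaffected by the weighting; this is what makes the abstract coordinate-descent hypotheses directly applicable here.
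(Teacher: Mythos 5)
Your proposal is correct and follows essentially the same route as the paper: both recognize \eqref{eq:RDR} as a randomized block-coordinate update of the firmly non-expansive Krasnosel'ski\u{\i}--Mann map $\mathsf{T}=\tfrac12\,\mathsf{O}_{\mu\A}\circ\mathsf{O}_{\mu\B}+\tfrac12\,\Id$, invoke the almost-sure convergence theorem of \cite{iutzeler2013asynchronous}, and conclude via the continuity of $\mathsf{J}_{\mu\B}$ and the characterization \eqref{eq:zero} of solutions of \eqref{eq:pb}. If anything, you spell out details the paper leaves implicit (the orthogonal block decomposition of $(\E,\langle\cdot,\cdot\rangle_P)$ by scenario, the requirement $q_{\min}>0$, and nonemptiness of the fixed-point set), which only strengthens the argument.
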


\begin{proof}
First, recall from Lemma\;\ref{lem:AB} that under assumptions\;\ref{hyp:gen}\;and\;\ref{hyp:gen2}, the operators $\A,\B$ of \eqref{eq:op} are maximal monotone. Then, the associated operators $\mathsf{O}_{\mu\A}$ and $\mathsf{O}_{\mu\A}$ are then non-expansive by construction  (see \cite[Chap.\;4.1]{bauschke2011convex}), and therefore the iteration $\mathsf{T} = (\mathsf{O}_{\mu\A}\circ \mathsf{O}_{\mu\B} + I)/2 $ is firmly non expansive. This is the key assumption to use the convergence result \cite[Th.~2]{iutzeler2013asynchronous} which gives that the sequence $(\varglob^k)$ generated by \eqref{eq:RDR} converges almost surely to a fixed point of $\mathsf{O}_{\mu\A}\circ\mathsf{O}_{\mu\B}$. Using the continuity of $ \mathsf{J}_{\mu\B}$ and the fact that $x^\star := \mathsf{J}_{\mu\B}(\varglob^\star) $ is a zero of  $\A+\B$ (i.e. solves the multi-stage problem  \eqref{eq:pb} by \eqref{eq:zero}) gives the last part of the result.
\end{proof}

Now that the convergence of \eqref{eq:RDR} with the operators of Appendix~\ref{sec:OPPH} has been proven, let us derive our Randomized Progressive Hedging (Algorithm\;\ref{alg:RPH}) as an equivalent formulation of \eqref{eq:RDR}. By doing so, the associated convergence result (Theorem\;\ref{th:rph}) directly follows from Proposition~\ref{th:RDR}.
}

From the specific expressions of operators $\mathsf{O}_{\mu\A}$ and $\mathsf{O}_{\mu\B}$ (Lemma~\ref{lem:AB}), we see that these operators are very different in nature:
\begin{itemize}
    \item $\mathsf{O}_{\mu\A}$ \emph{is separable by scenario} but involves \emph{solving a subproblem};
    \item $\mathsf{O}_{\mu\B}$ \emph{links the scenarios} but only amounts to \emph{computing a weighted average}.
\end{itemize}
To leverage this structure, we apply the randomized Douglas-Rachford method \eqref{eq:RDR} and get:
\begin{align*}
    \left\{
        \begin{array}{ll}
           \text{Draw a scenario } \selected \in \{1,\ldots,S\} \text{ with probability } \mathbb{P}[\selected = s] = q_s  & \\
            \var_t^{k,s} = \frac{1}{\sum_{\sigma\in\mathcal{B}^s_t} p_\sigma } \sum_{\sigma\in\mathcal{B}^s_t} p_\sigma \varglob_t^{k,\sigma}  \text{ for all } s=1,\ldots,S \text{ and } t=1,\ldots,T \hfill {\scriptstyle \var^k\in\NA} & \\
            \varvarglob^k = \mathsf{O}_{\mu\B}(\varglob^k) = 2\var^k - \varglob^k = \var^k - \mu \vartwo^k \hfill {\scriptstyle  \text{ with } \vartwo^k = (\varglob^k-\var^k)/\mu \in \NA^\perp  } &  \\
            \varvar^{k+1,s} = \argmin_{y\in\RR^n}\left\{ f^s(y) + \frac{1}{2\mu} \left\|y-\varvarglob^{k,s} \right\|^2  \right\} \text{ for all } s=1,\ldots,S & \\
            \left| \begin{array}{l}
                \varglob^{k+1,\selected} =\frac{1}{2} (2\varvar^{k+1,\selected}-\varvarglob^{k,\selected}) + \frac{1}{2}\varglob^{k,\selected} = \varglob^{k,\selected} + \varvar^{k+1,\selected} - \var^{k+1,\selected} = \varvar^{k+1,\selected} + \mu \vartwo^{k,\selected} \\
            \varglob^{k+1,s} =  \varglob^{k,s} \text{ for all } s\neq \selected
            \end{array}
             \right.
            &
        \end{array}
    \right.
\end{align*}
Let us carefully prune unnecessary computations. First, only $\varvar^{k+1,\selected}$ needs to be computed, so the other $\varvar^{k+1,s}$ ($s\neq \selected$) can be safely dropped. The same holds for $\var^{k,\selected}$, $\varvarglob^{k,\selected}$, and $\vartwo^{k,\selected}$. However, even though only $\var^{k,\selected}$ need to be computed, it depends on all the other scenarios through the projection operator, so the iterates have to be computed successively and with only a partial update of $\vartwo^k$ (in contrast with Appendix\;\ref{sec:OPPH}, $\vartwo^k$ does not belong to $\NA$ anymore and thus cannot be dropped out of the projection, thus we keep directly the global variable $\varglob^k$ updated):
\begin{align*}
    \left\{
        \begin{array}{ll}
           \text{Draw a scenario } \selected \in \{1,\ldots,S\} \text{ with probability } \mathbb{P}[\selected = s] = q_s  & \\
            \var_t^{k,\selected} = \frac{1}{\sum_{\sigma\in\mathcal{B}^{\selected}_t} p_\sigma } \sum_{\sigma\in\mathcal{B}^{\selected}_t} p_\sigma \varglob_t^{k,\sigma}  \text{ for all }  t=1,\ldots,T  & \\
            \varvarglob^{k,\selected} =  2\var^{k,\selected} - \varglob^{k,\selected}  &   \\
            \varvar^{k+1,\selected} = \argmin_{y\in\RR^n}\left\{ f^{\selected}(y) + \frac{1}{2\mu} \left\|y-\varvarglob^{k,\selected} \right\|^2  \right\}&  \\
            \left| \begin{array}{l}
                \varglob^{k+1,\selected} = \varglob^{k,\selected} + \varvar^{k+1,\selected} - \var^{k+1,\selected}  \\
            \varglob^{k+1,s} =  \varglob^{k,s} \text{ for all } s\neq \selected
            \end{array}
             \right.
            &
        \end{array}
    \right.
\end{align*}
Eliminating intermediate variable $ \varvarglob$, we obtain the randomized Progressive Hedging:
\begin{align*}
    \left\{
        \begin{array}{ll}
           \text{Draw a scenario } \selected \in \{1,\ldots,S\} \text{ with probability } \mathbb{P}[\selected = s] = q_s  & \\
            \var_t^{k+1,\selected} = \frac{1}{\sum_{\sigma\in\mathcal{B}^{\selected}_t} p_\sigma } \sum_{\sigma\in\mathcal{B}^{\selected}_t} p_\sigma \varglob_t^{k,\sigma}  \text{ for all }  t=1,\ldots,T  & \\
            \varvar^{k+1,\selected} = \argmin_{y\in\RR^n}\left\{ f^{\selected}(y) + \frac{1}{2\mu} \left\|y- 2\var^{k+1,\selected} + \varglob^{k,\selected} \right\|^2  \right\}& \\
            \left| \begin{array}{l}
                \varglob^{k+1,\selected} = \varglob^{k,\selected} + \varvar^{k+1,\selected} - \var^{k+1,\selected}  \\
            \varglob^{k+1,s} =  \varglob^{k,s} \text{ for all } s\neq \selected
            \end{array}
             \right.
            &
        \end{array}
    \right.
\end{align*}
 {
Finally, notice that from Proposition~\ref{th:RDR}, that the variable converging to a solution of \eqref{eq:pb} is  $\tilde{\var}^k := \mathsf{J}_{\mu\B}(\varglob^k)$. From Lemma~\ref{lem:AB} (and the fact that $\mathsf{O}_{\mu\B} = 2\mathsf{J}_{\mu\B} - \mathsf{I}$), we get that  $\tilde{\var}_t^{k,s} = \frac{1}{\sum_{\sigma\in\mathcal{B}^s_t} p_\sigma } \sum_{\sigma\in\mathcal{B}^s_t} p_\sigma \varglob_t^{k,\sigma}  \text{ for all } s=1,\ldots,S \text{ and } t=1,\ldots,T$
 and that $\tilde{\var}^k \in \NA$. }

\section{Derivation and Proof of the Asynchronous Randomized Progressive Hedging}\label{apx:arph}

Using again the bridge between Progressive Hedging and fixed-point algorithms, we present here how to derive an asynchronuous progressive hedeging from the asynchronous parallel fixed-point algorithm ARock\;\cite{peng2016arock}. In order to match the notation and derivations of \cite{peng2016arock}, let us define the operator  $\mathsf{S} := I - \mathsf{O}_{\mu\A} \circ \mathsf{O}_{\mu\B}$, the zeros of which coincide with the fixed points of $\mathsf{O}_{\mu\A} \circ \mathsf{O}_{\mu\B}$. Applying ARock to this operator leads to the following iteration:
\begin{align}
\nonumber
& \text{Every worker asynchronously do} \\
& \left\{
\begin{array}{l}
    \text{Draw a scenario } \selected  \in \{1,\ldots,S\} \text{ with probability } \mathbb{P}[\selected = s] = q_s \\
    \left| \begin{array}{l}
        \varglob^{k+1,\selected} = \varglob^{k,\selected} - \frac{\eta^k}{S p_{s^k}} \left( \varglobold^{k,s^k}  - \left[\mathsf{O}_{\mu\A}(\mathsf{O}_{\mu\B}(\varglobold^k))\right]^{\selected} \right) \\
    \varglob^{k+1,s} =  \varglob^{k,s} \text{ for all } s\neq \selected
    \end{array}
     \right. \\
     {\scriptstyle
     \text{where } \varglobold^k \text{ is the value of }  \varglob^{k} \text{ used by the updating worker at time $k$ for its computation}}
\end{array}
\right. \label{eq:adr}
\end{align}
Notice that the main difference between this iteration and \eqref{eq:RDR} is the introduction of the variable $\varglobold^k$ which is used to handle delays between workers in asynchronous computations:
\begin{itemize}
    \item If there is only one worker, it just computes its new point with the latest value so we simply have: $\varglobold^k = \varglob^k$. We notice that taking $\eta^k = S p_{s^k} /2 $, we recover exactly the randomized Douglas-Rachford method \eqref{eq:RDR};
    \smallskip
    \item If there are several workers, $\varglobold^k $ is usually an older version of the main variable, as other workers may have updated the main variable during the computation of the updating worker. In this case, we have $\varglobold^k  =  \varglob^{k-d^k}$ where $d^k$ is the delay suffered by the updating worker at time~$k$.
\end{itemize}

 {
We derive here our Asynchronous Randomized Progressive hedging (Algorithm\;\ref{alg:ARPH}) as an instantiation of \eqref{eq:adr} with the operators $\mathsf{O}_{\mu\A}$ and $\mathsf{O}_{\mu\B}$ defined in Appendix~\ref{sec:OPPH}. Let us establish first the convergence of this scheme using a general result of\;\cite{peng2016arock} which makes little assumptions on the communications between workers and master. The main requirement is that the maximum delay between workers is bounded, which is a reasonable assumption when the algorithm is run on a multi-core machine or on a medium-size computing cluster.

\begin{proposition}
\label{th:ADR}
Consider a multistage problem \eqref{eq:pb} verifying Assumptions\;\ref{hyp:gen}\;and\;\ref{hyp:gen2}.
We assume furthermore that the delays are bounded:
$d^k\leq \tau <\infty $ for all $k$.
If we take the stepsize $\eta^k$ as follows for some fixed  $0<c<1$
\begin{equation}\label{eq:bounddelay}
0< \eta_{\min} \leq \eta^k ~\leq~  \frac{c S q_{\min}}{2\tau \sqrt{q_{\min}} +1 }
\qquad \text{with $q_{\min} = \min_s q_s$}.
\end{equation}
Then, the sequence $(\varglob^k)$ generated by \eqref{eq:adr} with  $\mathsf{O}_{\mu\A}$ and $\mathsf{O}_{\mu\B}$ defined in Lemma~\ref{lem:AB} converges almost surely to a fixed point of  $\mathsf{O}_{\mu\A}\circ\mathsf{O}_{\mu\B}$. Furthermore, $\tilde{\var}^k := \mathsf{J}_{\mu\B}(\varglob^k)$ converges to a solution of~\eqref{eq:pb}.
\end{proposition}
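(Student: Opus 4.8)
The plan is to mirror the proof of Proposition~\ref{th:RDR}, replacing the synchronous randomized fixed-point theorem of~\cite{iutzeler2013asynchronous} by the asynchronous convergence result of ARock~\cite{peng2016arock}. First I would recall from Lemma~\ref{lem:AB} that, under Assumptions~\ref{hyp:gen} and~\ref{hyp:gen2}, the operators $\A$ and $\B$ of~\eqref{eq:op} are maximal monotone for the weighted inner product $\langle\cdot,\cdot\rangle_P$; hence their reflected resolvents $\mathsf{O}_{\mu\A}$ and $\mathsf{O}_{\mu\B}$ are nonexpansive and so is their composition $\mathsf{O}_{\mu\A}\circ\mathsf{O}_{\mu\B}$. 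Because~\eqref{eq:pb} admits a solution, this composition has a nonempty fixed-point set (via~\eqref{eq:zero} and the correspondence $\var^\star=\mathsf{J}_{\mu\B}(\varglob^\star)$), so finding a zero of $\mathsf{S}:=I-\mathsf{O}_{\mu\A}\circ\mathsf{O}_{\mu\B}$ is exactly the problem ARock is designed to solve.

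Next I would cast the iteration~\eqref{eq:adr} in the coordinate-block framework of~\cite{peng2016arock}: the $S$ scenarios play the role of the coordinate blocks, block $s$ being drawn with probability $q_s$, and $\varglobold^k=\varglob^{k-d^k}$ encodes the outdated information read by the updating worker. Under the bounded-delay assumption $d^k\le\tau$, the core of the argument is to invoke the main convergence theorem of~\cite{peng2016arock}, which guarantees almost-sure convergence of $(\varglob^k)$ to a zero of $\mathsf{S}$, equivalently a fixed point of $\mathsf{O}_{\mu\A}\circ\mathsf{O}_{\mu\B}$, provided the stepsize stays in an admissible range dictated by $\tau$ and $q_{\min}=\min_s q_s$.

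The step requiring the most care is verifying that the admissible stepsize range of~\cite{peng2016arock} specializes precisely to~\eqref{eq:bounddelay}. This amounts to substituting the nonexpansiveness constant, the minimal sampling probability $q_{\min}$, and the normalization $1/(S q_{\selected})$ appearing in~\eqref{eq:adr} into the generic ARock stepsize condition, and checking that the resulting upper bound reads $c\,S q_{\min}/(2\tau\sqrt{q_{\min}}+1)$ for any fixed $0<c<1$; the uniform lower bound $\eta_{\min}>0$ keeps the updates from vanishing. I expect this bookkeeping --- reconciling the weighted inner-product normalization used here with the conventions of~\cite{peng2016arock} --- to be the only genuine obstacle, the remainder being a direct citation.

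Finally, I would transfer convergence from $(\varglob^k)$ to $(\tilde{\var}^k)$ exactly as in Proposition~\ref{th:RDR}: since $\mathsf{J}_{\mu\B}$ is firmly nonexpansive, hence continuous, and $\tilde{\var}^k=\mathsf{J}_{\mu\B}(\varglob^k)$ with $\varglob^k\to\varglob^\star$ almost surely, we obtain $\tilde{\var}^k\to\mathsf{J}_{\mu\B}(\varglob^\star)=:\var^\star$, which is a zero of $\A+\B$ and therefore a solution of~\eqref{eq:pb} by~\eqref{eq:zero}. The feasibility $\tilde{\var}^k\in\NA$ almost surely follows from Lemma~\ref{lem:AB}, since $\mathsf{J}_{\mu\B}$ is the orthogonal projection onto $\NA$.
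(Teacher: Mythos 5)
Your proposal matches the paper's proof essentially step for step: nonexpansiveness of $\mathsf{O}_{\mu\A}$ and $\mathsf{O}_{\mu\B}$ from Lemma~\ref{lem:AB}, hence of $\mathsf{S}=I-\mathsf{O}_{\mu\A}\circ\mathsf{O}_{\mu\B}$, then a direct invocation of the ARock convergence theorem \cite[Th.~3.7]{peng2016arock} under the bounded-delay and stepsize condition \eqref{eq:bounddelay}, and finally the transfer to $\tilde{\var}^k$ via continuity of $\mathsf{J}_{\mu\B}$ and \eqref{eq:zero}, exactly as in Proposition~\ref{th:RDR}. The stepsize bookkeeping you flag as the delicate point is likewise left implicit in the paper (the bound is simply baked into the hypothesis so that the cited theorem applies), so your argument is correct and takes the same route.
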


\begin{proof}
The beginning of the proof follows the same lines as the one of Proposition~\ref{th:RDR} to show that $\mathsf{O}_{\mu\A}$ and $\mathsf{O}_{\mu\A}$ are non-expansive by construction, which implies that  $\mathsf{S} := I - \mathsf{O}_{\mu\A} \circ \mathsf{O}_{\mu\B}$ is also non-expansive with its zeros corresponding to the fixed points of $\mathsf{O}_{\mu\A} \circ \mathsf{O}_{\mu\B}$ (see \cite[Chap.~4.1]{bauschke2011convex}).
We can then apply \cite[Th.~3.7]{peng2016arock} to get that $(\varglob^k)$ converges almost surely to a zero of $\mathsf{S}$. As in the proof of Proposition~\ref{th:RDR}, we use the continuity of $ \mathsf{J}_{\mu\B}$ and the fact that $x^\star := \mathsf{J}_{\mu\B}(\varglob^\star) $ is a zero of $\A+\B$ (i.e. solves the multi-stage problem  \eqref{eq:pb} by \eqref{eq:zero}) to get the last part of the result.
\end{proof}
}

Using the expressions of the operators of Lemma\;\ref{lem:AB}, \eqref{eq:adr} writes
\begin{align*}
& \text{Every worker asynchronously do} \\
&    \left\{
        \begin{array}{ll}
           \text{Draw a scenario } \selected \in \{1,\ldots,S\} \text{ with probability } \mathbb{P}[\selected = s] = q_s  & \\
            \varold_t^{k,s} = \frac{1}{\sum_{\sigma\in\mathcal{B}^s_t} p_\sigma } \sum_{\sigma\in\mathcal{B}^s_t} p_\sigma \varglobold_t^{k,\sigma}  \text{ for all } s=1,\ldots,S \text{ and } t=1,\ldots,T  & \\
            \varvarglobold^k = \mathsf{O}_{\mu\B}(\varglobold^k) = 2\varold^k - \varglobold^k  & \\
            \varvarold^{k+1,s} = \argmin_{y\in\RR^n}\left\{ f^s(y) + \frac{1}{2\mu} \left\|y-\varvarglobold^{k,s} \right\|^2  \right\} \text{ for all } s=1,\ldots,S & \\
            \left[\mathsf{O}_{\mu\A}(\mathsf{O}_{\mu\B}(\varglobold^k))\right]^{\selected} = 2 \varvarold^{k+1,\selected} - \varvarglobold^{k,\selected}  & \\
            \left| \begin{array}{l}
                \varglob^{k+1,\selected} = \varglob^{k,\selected} - \frac{\eta^k}{S p_{s^k}} \left( \varglobold^{k,s^k}  - \left[\mathsf{O}_{\mu\A}(\mathsf{O}_{\mu\B}(\varglobold^k))\right]^{\selected} \right)  \\
            \varglob^{k+1,s} =  \varglob^{k,s} \text{ for all } s\neq \selected
            \end{array}
             \right.
            &
        \end{array}
    \right.
\end{align*}
Pruning unnecessary computations, the asynchronous version of Progressing Hedging boils down to:
\begin{align*}
& \text{Every worker asynchronously do} \\
&    \left\{
        \begin{array}{ll}
           \text{Draw a scenario } \selected \in \{1,\ldots,S\} \text{ with probability } \mathbb{P}[\selected = s] = q_s  & \\
            \varold_t^{k,\selected} = \frac{1}{\sum_{\sigma\in\mathcal{B}^{\selected}} p_\sigma } \sum_{\sigma\in\mathcal{B}^{\selected}_t} p_\sigma \varglobold_t^{k,\sigma}  \text{ for all }  t=1,\ldots,T  & \\
            \varvarold^{k+1,\selected} = \argmin_{y\in\RR^n}\left\{ f^{\selected}(y) + \frac{1}{2\mu} \left\|y- 2\varold^{k,\selected} + \varglobold^{k,\selected} \right\|^2  \right\} &  \\
            \left| \begin{array}{l}
                \varglob^{k+1,\selected} = \varglob^{k,\selected} + \frac{2 \eta^k}{S p_{s^k}} \left(  \varvarold^{k+1,\selected} -  \varold^{k,\selected} \right)  \\
            \varglob^{k+1,s} =  \varglob^{k,s} \text{ for all } s\neq \selected
            \end{array}
             \right.
            &
        \end{array}
    \right.
\end{align*}
This asynchronous algorithm can be readily rewritten as Algorithm\;\ref{alg:ARPH}, highlighting the master-worker implementation.  {Theorem\;\ref{th:rph} then follows directly from Proposition\;\ref{th:ADR}}.

\section{\tool toolbox: Implementations Details}\label{sec:rph_details}

A basic presentation of the toolbox \tool\;is provided in Section\;\ref{sec:toolbox}; a complete description is available on the online documentation. In this section, we briefly provide complementary information on the input/output formats.

The input format is a \texttt{Julia} structure, named \texttt{problem}, that gathers all the information to solve a given multi-stage problem.

\begin{lstlisting}
struct Problem{T} # Main input class implemented in src files
    scenarios::Vector{T}
    build_subpb::Function
    probas::Vector{Float64}
    nscenarios::Int
    nstages::Int
    stage_to_dim::Vector{UnitRange{Int}}
    scenariotree::ScenarioTree
end
\end{lstlisting}
The attribute \texttt{scenarios} is an array representing the possible scenarios of the problem. \texttt{nscenarios} is the total number of scenarios brought by the user and the probability affected to each scenario is indicated by the attribute \texttt{probas}. The number of stages, assumed to be equal among all scenarios, is stored in the attribute \texttt{nstages}. The dimension of the variable associated to each stage is stored in the vector of couples \texttt{stage\_to\_dim}: if for a fixed stage $i$, $\texttt{stage\_to\_dim[i]} = \texttt{p:q}$, then the variable associated to stage $i$ is of dimension $q - p + 1$. Each of the scenarios must inherit the abstract structure \texttt{AbstractScenario}. This abstract structure does not impose any requirements on the scenarios themselves, so that the user is free to plug any relevant information in these scenarios. Here is an example.

\begin{lstlisting}
abstract type AbstractScenario end  ## Abstract class implemented in src files

struct UserScenario <: AbstractScenario ## Custom class to be designed by the user
    trajcenter::Vector{Float64}
    constraintbound::Int
end

## Class Atributes to be designed by the user
# Instantiation of 4 scenarios
scenario1 = UserScenario([1, 1, 1], 3)
scenario2 = UserScenario([2, 2, 2], 3)
scenario3 = UserScenario([3, 3, 3], 3)
scenario4 = UserScenario([3, 3, 3], 3)
custom_scenarios = [scenario1, scenario2, scenario3, scenario4]

custom_nscenarios = length(custom_scenarios)
custom_stage_to_dim = [1:1, 2:2]
custom_nstages = length(stage_to_dim)
probabilities = [0.1, 0.25, 0.50, 0.15]
\end{lstlisting}

The function \texttt{build\_subpb}, provided by the user, informs the solver about the objective function $f_s$ to use for each scenario. This function is assumed to take as inputs a \texttt{Jump.model} object, a single \texttt{scenario} object and an object \texttt{scenarioId}, which corresponds to an integer that identifies the  scenario. \texttt{build\_subpb} must then return the variable designed for the optimization, named \texttt{y} below, an expression of the objective function $f_s$, denoted below \texttt{objexpr} as well as the constraints relative to this scenario, denoted below \texttt{ctrref}.

\begin{lstlisting}
# Function to be designed by the user
function custom_build_subpb(model::JuMP.Model, s::UserScenario, id_scen::ScenarioId)
    n = length(s.trajcenter)

    y = @variable(model, [1:n], base_name="y_s"*string(id_scen))
    objexpr = sum((y[i] - s.trajcenter[i])^2 for i in 1:n)
    ctrref = @constraint(model, y .<= s.constraintbound)

    return y, objexpr, ctrref
end
\end{lstlisting}

Finally, the attribute \texttt{scenariotree} is aimed at storing the graph structure of the scenarios. \texttt{scenariotree} must be of type \texttt{ScenarioTree}, a tree structure designed by the authors. One can build an object scenario tree, by directly stating the shape of the tree with the help of \texttt{Julia} set structure.
\begin{lstlisting}
## Instantiation of the scenario tree
stageid_to_scenpart = [
        OrderedSet([BitSet(1:4)]),                      # Stage 1
        OrderedSet([BitSet(1:2), BitSet(3:4)]),         # Stage 2
    ]
custom_scenariotree = ScenarioTree(stageid_to_scenpart)
## Instantiation of the problem
pb = Problem(
        custom_scenarios,  # scenarios array
        custom_build_subpb,
        custom_probabilities,
        custom_nscenarios,
        custom_nstages,
        custom_stage_to_dim,
        custom_scenariotree
    )
\end{lstlisting}

When the tree to generate is known to be complete, one can fastly generate a scenario tree with the help of the constructor by giving the depth of the tree and the degree of the nodes (assumed to be the same for each node in this case):
\begin{lstlisting}
custom_scenariotree =  ScenarioTree(; depth=custom_nstages, nbranching=2)

pb = Problem(
        custom_scenarios,  # scenarios array
        custom_build_subpb,
        custom_probabilities,
        custom_nscenarios,
        custom_nstages,
        custom_stage_to_dim,
        custom_scenariotree
    )
\end{lstlisting}

The output of the algorithm is the final iterate obtained together with information on the run of the algorithm.  Logs that appear on the console are the input parameters and the functional values obtained along with iterations. If the user wishes to track more information, a callback function can be instantiated and given as an input. This additional information can then either be logged on the console or stored in a dictionary \texttt{hist}.

\begin{lstlisting}
pb = build_simpleexample()
hist=OrderedDict{Symbol, Any}()

y_PH = solve_progressiveHedging(pb, maxiter=150, maxtime=40, hist=hist, callback=callback)
\end{lstlisting}

\end{document}